\newcommand{\github}[1]{\href{https://github.com/MichaelLiu2024/Entropic-Continuity-Bounds-and-Quantum-Distance-Measures}{#1}}
\newcommand{\bra}[1]{\left\langle #1 \right|}
\newcommand{\ket}[1]{\left| #1 \right\rangle}
\newcommand{\braket}[2]{\left\langle #1 \middle| #2 \right\rangle}
\newcommand{\ketbra}[2]{\ket{#1} \! \bra{#2}}
\newcommand{\set}[1]{\left\{ #1 \right\}}
\newcommand{\parenth}[1]{\left( #1 \right)}
\newcommand{\abs}[1]{\left| #1 \right|}
\newcommand{\norm}[1]{\left\lVert #1 \right\rVert}
\newcommand{\tr}[1]{\operatorname{tr} \parenth{#1}}
\newcommand{\trA}[1]{\operatorname{tr}_A \parenth{#1}}
\newcommand{\eps}{\varepsilon}
\newcommand{\T}[2]{\operatorname{T} \parenth{#1, #2}}
\newcommand{\F}[2]{\operatorname{F} \parenth{#1, #2}}
\newcommand{\Tc}[2]{\operatorname{T}_c \parenth{#1, #2}}
\newcommand{\Fc}[2]{\operatorname{F}_c \parenth{#1, #2}}
\newcommand{\A}[2]{\operatorname{A} \parenth{#1, #2}}
\newcommand{\G}[0]{\operatorname{G}}
\newcommand{\srss}[0]{\sqrt{\rho} \, \sqrt{\sigma}}
\newcommand{\power}[2]{#1^{#2\frac{1}{2}}}
\newcommand{\K}[0]{\sqrt{\sqrt{\rho} \, \sigma \sqrt{\rho}}}
\newcommand{\KU}[0]{\sqrt{\rho} \, \sqrt{\sigma} \, U}
\newcommand{\KxU}[0]{\sqrt{\rho} \, \ketbra{e_x}{e_x} \sqrt{\sigma} \, U}
\newcommand{\M}[0]{\power{\rho}{-} \K \, \power{\rho}{-}}
\newcommand{\Hvn}[1]{\operatorname{H} \parenth{#1}}
\newcommand{\hbinary}[1]{\operatorname{h} \parenth{#1}}
\newcommand{\HCond}[1]{\operatorname{H} \parenth{A|B}_{#1}}
\newcommand{\Hc}[0]{\operatorname{H}_c}
\newcommand{\spec}[0]{\operatorname{spec}}
\newcommand{\Span}[0]{\operatorname{Span}}
\newcommand{\lambertW}[1]{\operatorname{W}_0 \parenth{#1}}
\newcommand{\sumj}[0]{\sum_{j=1}^{d_A}}
\newcommand{\sumk}[0]{\sum_{k=1}^{d_B}}
\newcommand{\suml}[0]{\sum_{l=1}^{d_A}}
\newcommand{\sumkj}[0]{\sumk \sumj}
\newcommand{\xxx}[3]{#1(\theta)_{#2 #3}}
\newcommand{\xinX}[0]{x \in \mathcal{X}}
\newcommand{\yinX}[0]{y \in \mathcal{X}}
\newcommand{\LipschitzConstant}[1]{
    \begin{cases}
        \sqrt{8 \frac{\ln{x_0}}{x_0}} \sqrt{#1 - 1} & 1 \leq #1 \leq 4 \\
        2 \ln{#1} & #1 \geq 5
    \end{cases}
}
\newcommand{\negl}{\operatorname{negl}}
\newtheorem{theorem}{Theorem}
\newtheorem{lemma}{Lemma}
\newtheorem{condition}{Condition}
\begin{document}

\title{Lipschitz continuity of quantum-classical conditional entropies with respect to angular distance, and related properties of angular distance}

\author{Michael Liaofan Liu}
\email{mliu24@amherst.edu}
\affiliation{Institute for Quantum Computing, University of Waterloo, Waterloo, Ontario, Canada N2L 3G1}
\affiliation{Department of Mathematics, Amherst College, Amherst, MA 01002, USA}
\author{Florian Kanitschar}
\affiliation{Institute for Quantum Computing, University of Waterloo, Waterloo, Ontario, Canada N2L 3G1}
\affiliation{Technische Universität Wien, Faculty of Mathematics and Geoinformation, Wiedner Hauptstraße 8, 1040 Vienna, Austria}
\author{Amir Arqand}
\affiliation{Institute for Quantum Computing, University of Waterloo, Waterloo, Ontario, Canada N2L 3G1}
\author{Ernest Y.-Z. Tan}
\affiliation{Institute for Quantum Computing, University of Waterloo, Waterloo, Ontario, Canada N2L 3G1}

\date{\today}

\begin{abstract}
We derive a Lipschitz continuity bound for quantum-classical conditional entropies with respect to angular distance, with a Lipschitz constant that is independent of the dimension of the conditioning system. This bound is sharper in some situations than previous continuity bounds, which were either based on trace distance (where Lipschitz continuity is not possible), or based on angular distance but did not include a conditioning system. However, we find that the bound does not directly generalize to fully quantum conditional entropies. To investigate possible counterexamples in that setting, we study the characterization of states which saturate the Fuchs--van de Graaf inequality and thus have angular distance approximately equal to trace distance. We give an exact characterization of such states in the invertible case. For the noninvertible case, we show that the situation appears to be significantly more elaborate, and seems to be strongly connected to the question of characterizing the set of fidelity-preserving measurements.
\end{abstract}

\maketitle

\section{Introduction}
Given two quantum states $\rho$ and $\sigma$ on a Hilbert space $\mathcal{H}$, one of the most natural questions to ask is how similar $\rho$ and $\sigma$ are. Common measures to answer this question include the trace distance,
\begin{equation}\label{TraceDistanceDef}
    \T{\rho}{\sigma} \coloneqq \frac{1}{2} \norm{\rho - \sigma}_1,
\end{equation}
and the (root-)fidelity,
\begin{equation}\label{FidelityDef}
    \F{\rho}{\sigma} \coloneqq \norm{\srss}_1.
\end{equation}
The trace distance is a metric on the set of density operators $\mathcal{D}(\mathcal{H})$, and it has a meaningful interpretation as the distinguishability of two quantum states. In a quantum hypothesis testing scenario, where Bob randomly prepares one of two states $\rho$ and $\sigma$ (with equal probability) for Alice to distinguish, Alice can correctly identify the incoming state with probability $\frac{1 + \T{\rho}{\sigma}}{2}$. In contrast, the fidelity is not a metric, but it can be interpreted as the probability that a state $\rho$ ``passes a test" for being the same as a pure state $\sigma$ \cite{MW13}.

Another important task in quantum information theory is to quantify the amount of information present in a quantum system. The von Neumann entropy
\begin{equation*}
    \Hvn{\rho} \coloneqq - \tr{\rho \ln \rho}
\end{equation*}
is one quantity which fulfills this role \footnote{In this work, we define entropies via the natural logarithm rather than the base-$2$ logarithm for ease of presentation in the proofs.}, as it appears in many fundamental information theoretic tasks such as Schumacher data compression \cite{CD96} and randomness extraction \cite{BFW12}. This concept can be extended to conditional entropies $\HCond{\rho}$ for bipartite states $\rho \coloneqq \rho_{A B} \in \mathcal{D}(\mathcal{H}^A \otimes \mathcal{H}^B)$, with one of several equivalent definitions being the difference between the joint entropy and the marginal entropy,
\begin{equation}\label{ConditionalEntropyDefinition}
    \HCond{\rho} \coloneqq \Hvn{\rho_{AB}} - \Hvn{\rho_B},
\end{equation}
where $\rho_B \coloneqq \trA{\rho}$ is the reduced state of $\rho$ on $\mathcal{H}^B$. Further details about quantum distance measures and quantum entropies can be found in e.g.\ \cite{NC10, MW13}.

A useful property of the von Neumann entropy is that it is continuous for finite-dimensional quantum systems. This motivates the search for so-called entropic continuity bounds, which capture the notion that two states $\rho, \sigma$ close in some metric $\operatorname{d}$, e.g.\ $\operatorname{d}(\rho, \sigma) = \delta \gtrsim 0$, are expected to be close in entropy as well, i.e.\ 
\[
\abs{\Hvn{\rho} - \Hvn{\sigma}} \leq f(\delta),
\]
where $f$ is some function such that $\lim_{\delta \to 0} f(\delta) = 0$.

For example, in~\cite{KA07}, Audenaert derived the tightest form of the Fannes-type continuity bound for the von Neumann entropy in terms of trace distance. Specifically, letting $d \coloneqq \dim{\mathcal{H}} \in \mathbb{N}$, $\rho, \sigma \in \mathcal{D}(\mathcal{H})$, and $T \coloneqq \T{\rho}{\sigma}$, Audenaert showed that
\begin{equation}\label{AudenaertBound}
    \abs{\Hvn{\rho} - \Hvn{\sigma}} \leq T \ln\parenth{d-1} + \hbinary{T},
\end{equation}
where $\operatorname{h}(x) \coloneqq -x\ln x - (1-x)\ln (1-x)$ is the binary entropy function.

Similar continuity bounds also exist for conditional entropies. As shown by Winter~\cite{Win16}, letting $d_A \coloneqq \dim \mathcal{H}^A \in \mathbb{N}$, $d_B \coloneqq \dim \mathcal{H}^B \in \mathbb{N}$, $\rho, \sigma \in \mathcal{D}\parenth{\mathcal{H}^A \otimes \mathcal{H}^B}$, and $T \coloneqq \T{\rho}{\sigma}$, the following holds:
\begin{equation}\begin{aligned}\label{WinterBound}
    \Big| \! \HCond{\rho} &- \HCond{\sigma} \! \Big| \\
    &\leq 2 T \ln d_A + \parenth{1+T} \hbinary{\frac{T}{1+T}}.
\end{aligned}\end{equation}

Such continuity bounds have been applied in various contexts. For example, in~\cite{Upadhyaya_2021}, Upadhyaya et al.\ constructed a finite-dimensional cutoff formulation for a class of infinite-dimensional entropy optimization problems. Qualitatively, that work argues that if an infinite-dimensional state is ``close'' (under some metric) to a finite-dimensional state, then an entropic continuity bound allows us to replace the former with the latter and compensate for the resulting change in entropy by applying a correction based on the continuity bound.
This so-called dimension-reduction method plays an important role in quantum key distribution (QKD) security proofs~\cite{KGL+23}. However, it relies heavily on the continuity bound in Eq.~(\ref{WinterBound}) to compute the required correction term. An improved continuity bound would lead to a smaller correction term in this method and hence a larger secret key rate.

Another application of entropic continuity bounds arises in unstructured entropy optimization problems, as studied in e.g.~\cite{SBV+21}. In that work, the approach is that in order to minimize the entropy over some set of states, one simply computes the entropy on a sufficiently fine discrete ``grid'' of states in the set, then uses the continuity bound to ensure that the true minimum does not lie more than $f(\delta)$ away from the minimum over the grid. Again, an improved continuity bound would result in tighter results from such an approach.

In the above contexts, two desirable properties of the continuity bound $f(\delta)$ (for conditional entropies) are as follows.
\begin{condition}\label{DesiredBoundConditionIndependence}
$f(\delta)$ should be independent of $d_B$, the dimension of the conditioning system $\mathcal{H}^B$.
\end{condition}
\begin{condition}\label{DesiredBoundConditionDifferentiable}
$f(\delta)$ should have finite (and ideally small) derivative
at $\delta = 0$.
\end{condition}
\noindent The first property is useful (or in some cases required) for the applications mentioned above, since in those contexts the conditioning system may have large or unbounded dimension. The second property is desirable for obtaining better scaling at small $\delta$, since then we would not require extremely small values of $\delta$ in order to force the entropy difference to be small.

While the Winter bound (Eq.~(\ref{WinterBound})) satisfies condition~\ref{DesiredBoundConditionIndependence}, it does not satisfy condition~\ref{DesiredBoundConditionDifferentiable} due to the binary entropy term $\operatorname{h}$, which has unbounded derivative as $\delta \to 0$. In fact, such scaling of the conditional entropy with respect to trace distance is in some sense unavoidable, since there is an explicit family of states that saturates the Audenaert bound (Eq.~(\ref{AudenaertBound})), which has the binary entropy term as well. To work around this issue and obtain a bound that satisfies both conditions \ref{DesiredBoundConditionIndependence} and  \ref{DesiredBoundConditionDifferentiable}, one approach is to consider an alternative distance measure such as the angular distance, defined as
\begin{equation*}
    \A{\rho}{\sigma} \coloneqq \arccos{\F{\rho}{\sigma}}.
\end{equation*}
We remark that this is not simply an arbitrary change of distance measure: in the context of the applications mentioned above, the quantity that arises ``naturally'' in the analysis is the fidelity rather than the trace distance, hence working with the bound in Eq.~(\ref{WinterBound}) is somewhat suboptimal.

This approach is promising in light of the following result. In \cite{SBV+21}, Sekatski et al.\ proved Lipschitz continuity of the von Neumann entropy with respect to angular distance. That is, for $d \coloneqq \dim{\mathcal{H}} \in \mathbb{N}$, $\rho, \sigma \in \mathcal{D}(\mathcal{H})$, and $x_0 \coloneqq \exp{\parenth{\lambertW{-\frac{2}{e}}}} \approx 4.922$, where $\operatorname{W}_0$ is the principal branch of the Lambert-W function, it was shown that
\begin{equation}\label{SekatskiBound}
    \abs{\Hvn{\rho} - \Hvn{\sigma}} \leq u(d) \A{\rho}{\sigma},
\end{equation}
where the Lipschitz constant $u(d)$ is
\begin{equation}\label{LipschitzConstant}
    u(d) \coloneqq \LipschitzConstant{d}.
\end{equation}

Now, a naive application of Eq.~(\ref{SekatskiBound}) to conditional entropies, using Eq.~(\ref{ConditionalEntropyDefinition}) and the triangle inequality, would yield
\begin{equation}\begin{aligned}\label{SekatskiBoundConditional}
    \Big| \! \HCond{\rho} &- \HCond{\sigma} \! \Big| \\
    &\leq u(d_A d_B) \A{\rho}{\sigma} + u(d_B) \A{\rho_B}{\sigma_B} \\
    &\leq \parenth{u(d_A d_B) + u(d_B)} \A{\rho}{\sigma},
\end{aligned}\end{equation}
where in the last line we used the monotonicity of the angular distance under quantum channels. While this bound satisfies condition~\ref{DesiredBoundConditionDifferentiable}, it violates condition $\ref{DesiredBoundConditionIndependence}$. However, the estimates to obtain Eq.~(\ref{SekatskiBoundConditional}) from Eq.~(\ref{SekatskiBound}) are crude and leave room for refinement. Thus, we ask whether it is possible to obtain Lipschitz continuity of the conditional entropy with respect to angular distance, while avoiding dependence on $d_B$ in the final bound. 

In this work, we answer this question in the affirmative when $\rho$ and $\sigma$ are quantum-classical states on $\mathcal{H}^A \otimes \mathcal{H}^B$ (i.e.\ when there exists an orthonormal basis $\set{\ket{g_k}}_k$ for $\mathcal{H}^B$ such that both $\rho$ and $\sigma$ are of the form $\sum_k \gamma_k \tau_k \otimes \ketbra{g_k}{g_k}$ for some density operators $\tau_k \in \mathcal{D}(\mathcal{H}^A)$ and probabilities $\gamma_k \in [0,1]$). We present this result in Sec.~\ref{sec:contbnd}. However, we find that our bound does not hold in general for fully quantum states. To further investigate counterexamples in this setting, we study characterizations of states saturating the Fuchs--van de Graaf inequalities. In particular, the states saturating the upper bound in the inequality have $\T{\rho}{\sigma}\approx \A{\rho}{\sigma}$ when $\A{\rho}{\sigma}$ is small, so these states could pose an obstruction to deriving continuity bounds in terms of $\A{\rho}{\sigma}$ that scale better than those in terms of $\T{\rho}{\sigma}$. While it is well-known that any pair of pure states saturate the upper Fuchs--van de Graaf inequality, we show that these are not the only such states. In Sec.~\ref{sec:FvdG}, we provide a characterization of all such pairs $(\rho,\sigma)$ in the case where both of them are invertible. This result may be of independent interest in other applications such as computing QKD keyrates (we discuss this further in the appendices). However, we find that such a characterization in the general case where $(\rho,\sigma)$ are noninvertible appears significantly more challenging, and we discuss how it relates to identifying the set of measurements that preserve the fidelity between states. Finally, we provide some concluding remarks in Sec.~\ref{sec:conclusion}.

\section{Continuity Bound}\label{sec:contbnd}

We now state and prove the main result of our manuscript, a continuity bound for the conditional entropy of quantum-classical states with respect to angular distance. Subsequently, we discuss the tightness of this bound, and we highlight some challenges for generalizing our result to classical-quantum or fully quantum states.

\subsection{Main theorem and proof}

\begin{theorem}
Let $\mathcal{H}^A$ and $\mathcal{H}^B$ be Hilbert spaces of finite dimension $d_A$ and $d_B$, respectively. Let $\rho, \sigma \in \mathcal{D}\parenth{\mathcal{H}^A \otimes \mathcal{H}^B}$. Let $u(\cdot)$ and $x_0$ be defined as in Eq.~(\ref{LipschitzConstant}) and the preceding text. Suppose in addition that $\rho$ and $\sigma$ are both quantum-classical states with respect to $\mathcal{H}^A$ and $\mathcal{H}^B$. Then
\begin{equation}\label{ContinuityBound}
    \abs{\HCond{\rho} - \HCond{\sigma}} \leq u(d_A) \A{\rho}{\sigma}.
\end{equation}
\end{theorem}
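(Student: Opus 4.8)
The plan is to reduce the bipartite quantum-classical continuity bound to the dimension-$d_A$ case of Sekatski et al.'s bound (Eq.~(\ref{SekatskiBound})), applied blockwise. Since $\rho$ and $\sigma$ are quantum-classical with respect to $\mathcal{H}^A \otimes \mathcal{H}^B$, write them in a common orthonormal basis $\set{\ket{k}}_{k=1}^{d_B}$ of $\mathcal{H}^B$ as $\rho = \sumk p_k \, \rho_k \otimes \ketbra{k}{k}$ and $\sigma = \sumk q_k \, \sigma_k \otimes \ketbra{k}{k}$, where $p_k, q_k \geq 0$, $\sum_k p_k = \sum_k q_k = 1$, and each $\rho_k, \sigma_k \in \mathcal{D}(\mathcal{H}^A)$. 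Here I must be slightly careful: the two states need not be diagonal in the \emph{same} basis of $\mathcal{H}^B$ a priori, but if they were diagonal in different bases then the fidelity $\F{\rho}{\sigma}$ would strictly decrease under the dephasing channel on $\mathcal{H}^B$ that diagonalizes one of them, and one can argue (or restrict attention to the case) that a common basis can be chosen; alternatively I would simply take ``quantum-classical with respect to $\mathcal{H}^A$ and $\mathcal{H}^B$'' to mean classical on a fixed common register. Then $\rho_B = \sumk p_k \ketbra{k}{k}$, $\sigma_B = \sumk q_k \ketbra{k}{k}$, and the conditional entropy decomposes as
\begin{equation}\label{CondEntDecomp}
    \HCond{\rho} = \Hvn{\rho_{AB}} - \Hvn{\rho_B} = \sumk p_k \Hvn{\rho_k},
\end{equation}
with the analogous formula for $\sigma$.

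Next I would express $\HCond{\rho} - \HCond{\sigma} = \sumk \parenth{p_k \Hvn{\rho_k} - q_k \Hvn{\sigma_k}}$ and try to bound each term by something controlled by an appropriate ``angle.'' The key structural fact I would use is that for quantum-classical states the root-fidelity factorizes across the classical blocks: $\F{\rho}{\sigma} = \sumk \sqrt{p_k q_k} \, \F{\rho_k}{\sigma_k}$, hence $\cos\A{\rho}{\sigma} = \sumk \sqrt{p_k q_k} \cos\A{\rho_k}{\sigma_k}$. The right-hand side has the form of an inner product between the unit vector $\parenth{\sqrt{p_k}\cos\A{\rho_k}{\sigma_k}, \sqrt{p_k}\sin\A{\rho_k}{\sigma_k}}_k$-type object and $\parenth{\sqrt{q_k}}_k$ — this is exactly the setup in which angular distances ``add in quadrature'' in the spirit of a spherical/Pythagorean estimate. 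Concretely, I expect to show an inequality of the form $\sumk \parenth{\text{weighted } \A{\rho_k}{\sigma_k}^2 \text{ plus a term from } \abs{p_k - q_k}} \lesssim \A{\rho}{\sigma}^2$, i.e.\ the fidelity factorization lets the per-block angles and the classical angle between $(p_k)$ and $(q_k)$ be simultaneously dominated by the global angular distance. Combined with Eq.~(\ref{SekatskiBound}) in dimension $d_A$, $\abs{\Hvn{\rho_k} - \Hvn{\sigma_k}} \leq u(d_A) \A{\rho_k}{\sigma_k}$, and with a triangle-inequality split $p_k \Hvn{\rho_k} - q_k \Hvn{\sigma_k} = p_k\parenth{\Hvn{\rho_k} - \Hvn{\sigma_k}} + (p_k - q_k)\Hvn{\sigma_k}$ together with the bound $\Hvn{\sigma_k} \leq \ln d_A$, this should assemble into Eq.~(\ref{ContinuityBound}) after a Cauchy--Schwarz step (using $\ln d_A \leq u(d_A)$, which holds since $u(d) \geq 2\ln d$ for $d \geq 5$ and is checkable for small $d$).

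The main obstacle, I expect, is the quadrature/Pythagorean estimate in the previous paragraph: showing rigorously that the convex-combination structure $\cos\A{\rho}{\sigma} = \sumk \sqrt{p_k q_k}\cos\A{\rho_k}{\sigma_k}$ forces $\sumk p_k \A{\rho_k}{\sigma_k}^2$ (and the classical contribution $\sum_k (\sqrt{p_k}-\sqrt{q_k})^2$, essentially the squared Hellinger/Bhattacharyya angle) to be bounded by $\A{\rho}{\sigma}^2$, with the right constant so that the final Lipschitz constant is exactly $u(d_A)$ and not something larger. This likely requires a careful one-variable convexity or Jensen-type argument on the function $\theta \mapsto \cos\theta$ together with the observation that $\arccos$ is operator-monotone-like on $[0,1]$ in the relevant sense; the delicate point is matching constants rather than merely getting \emph{some} dimension-$d_B$-independent bound. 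A secondary subtlety is handling blocks where $p_k = 0$ or $q_k = 0$ (so that $\rho_k$ or $\sigma_k$ is undefined) — these contribute only through the classical term and should be dealt with by a limiting argument or by noting such blocks only increase $\A{\rho}{\sigma}$, and I would also need the small technical input that the entropy difference $\abs{\Hvn{\rho_k} - \Hvn{\sigma_k}}$ is genuinely controlled by $\A{\rho_k}{\sigma_k}$ uniformly, which is exactly Eq.~(\ref{SekatskiBound}) and requires nothing new.
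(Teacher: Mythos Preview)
Your blockwise strategy is a natural first attempt, but the ``Pythagorean estimate'' you flag as the main obstacle is not just delicate---it is false with constant $1$, and this is precisely why the paper takes a different route. From the identity $1-\cos\A{\rho}{\sigma} = (1-\cos\phi) + \sum_k \sqrt{p_k q_k}\,(1-\cos\theta_k)$ (with $\theta_k\coloneqq \A{\rho_k}{\sigma_k}$ and $\cos\phi\coloneqq\sum_k\sqrt{p_k q_k}$), one can only deduce $\phi^2+\sum_k\sqrt{p_k q_k}\,\theta_k^2 \leq c\,\A{\rho}{\sigma}^2$ with $c=\pi^2/4>1$, and your $p_k$-weighted version fares worse: for example with $p=(3/4,1/4)$, $q=(1/4,3/4)$, $\theta_1=\pi/2$, $\theta_2=0$ one gets $\sum_k p_k\theta_k^2+\phi^2\approx 2.13$ versus $\A{\rho}{\sigma}^2\approx 1.26$. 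Even granting a quadrature inequality with constant $1$, your triangle-inequality split $p_k\Hvn{\rho_k}-q_k\Hvn{\sigma_k}=p_k(\Hvn{\rho_k}-\Hvn{\sigma_k})+(p_k-q_k)\Hvn{\sigma_k}$ followed by Cauchy--Schwarz produces a Lipschitz constant of order $\sqrt{u(d_A)^2+(2\ln d_A)^2}$, not $u(d_A)$. So your approach could plausibly yield \emph{some} $d_B$-independent Lipschitz bound, but not the sharp constant in the theorem.

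The paper avoids this loss by \emph{not} decomposing into per-block entropy differences at all. Instead it encodes the eigenvalues of $\rho$ and $\sigma$ as unit vectors $r,s\in\mathbb{R}^{d_A d_B}$ (square roots of eigenvalues, ordered compatibly within each classical block), shows $\arccos(r\cdot s)\leq\A{\rho}{\sigma}$, and integrates the function $\Hc(v)\coloneqq -\sum_{k,j} v_{jk}^2\ln v_{jk}^2 + \sum_k\big(\sum_j v_{jk}^2\big)\ln\big(\sum_l v_{lk}^2\big)$ along the great circle from $r$ to $s$. The crucial point is that the $\Hvn{\rho_{AB}}$ and $\Hvn{\rho_B}$ contributions to the derivative $\Hc'(\theta)$ combine \emph{before} any bounding, collapsing to $2\big|\sum_{k,j} v_{jk}w_{jk}\ln\big(\frac{\sum_l v_{lk}^2}{v_{jk}^2}\big)\big|$; only then is Cauchy--Schwarz applied, and the resulting sum involves $\ln^2$ of ratios bounded by $d_A$ (not $d_A d_B$), so the concave-envelope argument of Sekatski et al.\ gives exactly $u(d_A)$. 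The cancellation you would need between your two separately bounded terms is instead exact and pointwise inside the derivative.
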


\begin{proof}
Since $\rho$ and $\sigma$ are quantum-classical states, we can write
\begin{equation*}\begin{aligned}
    \rho &= \sumk \alpha_k \rho_k \otimes \ketbra{f_k}{f_k} \\
    \sigma &= \sumk \beta_k \sigma_k \otimes \ketbra{f_k}{f_k}
\end{aligned}\end{equation*}
for some density operators $\rho_k, \sigma_k \in \mathcal{D}(\mathcal{H}^A)$, probabilities $\alpha_k, \beta_k \in [0,1]$ which satisfy $\sumk \alpha_k = 1 = \sumk \beta_k$, and orthonormal basis $\set{\ket{f_k}}_k$ for $\mathcal{H}^B$. For each $k$, consider a spectral decomposition of $\rho_k$ and $\sigma_k$,
\begin{equation*}\begin{aligned}
    \rho_k &= \sumj p_{j k} \ketbra{e_{j k}}{e_{j k}} \\
    \sigma_k &= \sumj q_{j k} \ketbra{\tilde{e}_{j k}}{\tilde{e}_{j k}},
\end{aligned}\end{equation*}
where the eigenvalues $p_{j k},q_{j k} \geq 0$ satisfy $\sumj p_{j k} = 1 = \sumj q_{j k}$, and the eigenvectors form orthonormal bases $\set{\ket{e_{j k}}}_j, \set{\ket{\tilde{e}_{j k}}}_j$ for $\mathcal{H}^A$. Defining $\rho_{j k} \coloneqq p_{j k} \alpha_k$ and $\sigma_{j k} \coloneqq q_{j k} \beta_k$ for all $j$ and $k$, $\rho$ and $\sigma$ can be written as
\begin{equation*}\begin{aligned}
    \rho &= \sumkj \rho_{j k} \ketbra{e_{j k}}{e_{j k}} \otimes \ketbra{f_k}{f_k} \\
    \sigma &= \sumkj \sigma_{j k} \ketbra{\tilde{e}_{j k}}{\tilde{e}_{j k}} \otimes \ketbra{f_k}{f_k},
\end{aligned}\end{equation*}
and their partial traces can be written as
\begin{equation*}\begin{aligned}
    \rho_B &\coloneqq \trA{\rho} = \sumk \parenth{\sumj \rho_{j k}} \ketbra{f_k}{f_k} \\
    \sigma_B &\coloneqq \trA{\sigma} = \sumk \parenth{\sumj \sigma_{j k}} \ketbra{f_k}{f_k}.
\end{aligned}\end{equation*}

Now, observe that the eigenvalues $\rho_{j k}$ and $\sigma_{j k}$ of $\rho$ and $\sigma$ completely determine the eigenvalues of their partial traces $\rho_B$ and $\sigma_B$, respectively. This allows us to ``map" the problem to $\mathbb{R}^d$, where $d \coloneqq d_A d_B$, as follows. For each $k \in \set{1,\ldots,d_B}$, let us choose the ordering of the eigenvalues $p_{jk}$ (and corresponding eigenvectors $\ket{e_{j k}}$) to be such that $p_{1k} \geq p_{2k} \geq ... \geq p_{d_A k}$; similarly, choose the ordering of the eigenvalues $q_{jk}$ to be such that $q_{1k} \geq q_{2k} \geq ... \geq q_{d_A k}$. Now, consider the vectors
\begin{equation}\label{eq:rsdefn}
\begin{aligned}
    r &\coloneqq \parenth{\sqrt{\rho_{j k}}}_{k,j} \\
    s &\coloneqq \parenth{\sqrt{\sigma_{j k}}}_{k,j}
\end{aligned}
\end{equation}
in $\mathbb{R}^d$, where the entries of $r$ and $s$ are ordered with $k$ as the outer index and $j$ as the inner index. We observe that the angular distance $\A{\rho}{\sigma}$ between $\rho$ and $\sigma$ is always lower bounded by the angular distance $\theta_0 \coloneqq \arccos\parenth{r \cdot s} \in [0, \frac{\pi}{2}]$ between $r$ and $s$. To see this, we decompose the fidelity as a sum over $k$ using the quantum-classical structure, then apply a variational characterization of the trace norm \cite{MW13} and the von Neumann trace inequality \cite{Mirsky75}, which yields
\begin{align*}
    \norm{\srss}_1 &= \sumk \sqrt{\alpha_k \beta_k} \norm{\sqrt{\rho_k}\sqrt{\sigma_k}}_1\\ 
    &= \sumk \sqrt{\alpha_k \beta_k} \abs{\tr{\sqrt{\rho_k}\sqrt{\sigma_k} U_k}} \\
    &\leq \sumk \sqrt{\alpha_k \beta_k} \sumj \sqrt{p_{jk}}\sqrt{q_{jk}} \\
    &= \sumkj \sqrt{\rho_{j k}} \sqrt{\sigma_{j k}} \\
    &= r \cdot s,
\end{align*}
where the $U_k$ are some unitaries on $\mathcal{H}^A$. Thus, we see that
\begin{equation}\label{theta0andAngularDistance}
    \theta_0 = \arccos\parenth{r \cdot s} \leq \arccos\norm{\srss}_1 = \A{\rho}{\sigma},
\end{equation}
as needed.

Next, since the eigenvalues of $\rho$ and $\sigma$ completely determine the eigenvalues of their partial traces, it is possible to compute the conditional entropy of $\rho$ and $\sigma$ given only the vectors $r$ and $s$. To see this, consider the following function
\begin{equation*}\begin{aligned}
    \Hc \parenth{v} \coloneqq
    &- \sumkj v_{j k}^2 \ln v_{j k}^2 \\
    &+ \sumk \parenth{\sumj v_{j k}^2} \ln{\parenth{\suml v_{l k}^2}},
\end{aligned}\end{equation*}
where $v = \parenth{v_{j k}}_{k,j}$ can be any vector in $\mathbb{R}^d$. Then
\begin{equation}\begin{aligned}\label{HcrandHcs}
    \Hc(r) &= -\sumkj \rho_{j k} \ln \rho_{j k} + \sumk \parenth{\sumj \rho_{j k}} \ln\parenth{\suml \rho_{l k}} \\
    &= \HCond{\rho} \\
    \Hc(s) &= -\sumkj \sigma_{j k} \ln \sigma_{j k} + \sumk \parenth{\sumj \sigma_{j k}} \ln\parenth{\suml \sigma_{l k}} \\
    &= \HCond{\sigma},
\end{aligned}\end{equation}
so the vectors $r$ and $s$ are sufficient to determine the conditional entropies $\HCond{\rho}$ and $\HCond{\sigma}$.

The idea of our proof is now to integrate from $r$ to $s$ in $\mathbb{R}^d$, tracking the infinitesimal changes in the conditional entropy and angular distance. To see this formally, first note that $r$ and $s$ are unit vectors (with respect to the standard inner product on $\mathbb{R}^d$), since $r \cdot r = \tr{\rho} = 1 = \tr{\sigma} = s \cdot s$. Moreover, we have $r, s \geq 0$ by definition~(\ref{eq:rsdefn}). Now, note that if $r \cdot s = 1$, then $r = s$, so we have $\Hc(r) = \Hc(s)$ i.e.~$\HCond{\rho} = \HCond{\sigma}$. Since $u(d_A) \geq 0$ and $\A{\rho}{\sigma} \geq 0$, Eq.\ (\ref{ContinuityBound}) holds trivially in this case. Now consider the remaining case $r \cdot s \in [0, 1)$. Let $\tilde{s}$ be the normalized projection of $s$ onto the orthogonal complement of $\Span \set{r}$,
\begin{equation*}
    \tilde{s} \coloneqq \frac{s - \parenth{s \cdot r} r}{\abs{s - \parenth{s \cdot r} r}}.
\end{equation*}
Using $\tilde{s}$, we define the path
\begin{equation*}
    v(\theta) \coloneqq \cos(\theta) r + \sin(\theta) \tilde{s}
\end{equation*}
from $r$ to $s$, where $\theta \in [0, \theta_0]$. Note that $v(0) = r$, $v(\theta_0) = s$, and $v(\theta)$ traverses the great circle along the $(d-1)$-sphere from $r$ to $s$. In addition, note that $\abs{v(\theta)} = 1$ for all $\theta \in [0, \theta_0]$. Now, the tangent to the path $v(\theta)$ is
\begin{equation*}
    w(\theta) \coloneqq v'(\theta) = -\sin(\theta) r + \cos(\theta) \tilde{s},
\end{equation*}
which satisfies $\abs{w(\theta)} = 1$ and $v(\theta) \cdot w(\theta) = 0$ for all $\theta \in [0, \theta_0]$.

For notational simplicity, we now define $\Hc(\theta) \coloneqq \Hc(v(\theta))$, so
\begin{equation*}\begin{aligned}
    \Hc(\theta) = &-\sumkj \xxx{v}{j}{k}^2 \ln(\xxx{v}{j}{k}^2) \\ &+ \sumk \parenth{\sumj \xxx{v}{j}{k}^2} \ln{\parenth{\suml \xxx{v}{l}{k}^2}}.
\end{aligned}\end{equation*}
Observe that $\Hc(\theta)$ is continuous on $[0,\theta_0]$ (under the standard convention for entropy definitions that $0 \ln 0 \equiv 0$).
Thus, if we show that $\Hc(\theta)$ is differentiable on $(0,\theta_0)$ and its derivative satisfies $\abs{\Hc'(\theta)} \leq u(d_A)$ on that interval, then the desired result follows immediately, since
\begin{equation*}\begin{aligned}
    \abs{\HCond{\sigma} - \HCond{\rho}} &= \abs{\Hc(\theta_0) - \Hc(0)} \\
    &= \abs{\int_{0}^{\theta_0} \Hc'(\theta) \, d\theta} \\
    &\leq \int_{0}^{\theta_0} \abs{\Hc'(\theta)} \, d\theta \\
    &\leq u(d_A) \theta_0 \\
    &\leq u(d_A) \A{\rho}{\sigma},
\end{aligned}\end{equation*}
where the first line follows from Eq.~(\ref{HcrandHcs}), and the last line follows from Eq.~(\ref{theta0andAngularDistance}).
Thus, all that remains is to bound $\abs{\Hc'(\theta)}$ by $u(d_A)$.

To do this, we first handle a technicality regarding zero eigenvalues. For each $k \in \set{1, \ldots, d_B}$, let $S^k_A$ be the set of $j \in \set{1, \ldots, d_A}$ such that at least one of $r_{j k},s_{j k}$ is nonzero. Furthermore, let $S_B$ be the set of $k \in \set{1, \ldots, d_B}$ such that $S^k_A$ is nonempty. Then for any $(k,j)$ with $k\in S_B$ and $j\in S^k_A$, at least one of $r_{j k},s_{j k}$ is nonzero, which implies that $\xxx{v}{j}{k}^2 > 0$ for all $\theta \in (0, \theta_0)$. Moreover, for all other $(k,j)$, we have that $r_{j k}=s_{j k}=0$, so $\xxx{v}{j}{k}^2=0$ for all $\theta \in (0, \theta_0)$, which implies that the value of $\Hc(\theta)$ would not change upon removing the term $\xxx{v}{j}{k}^2$. Thus, in the remainder of the argument, summations of the form $\sum_{k,j}$ should be understood to mean $\sum_{k\in S_B} \sum_{j\in S^k_A}$ (and analogously, $\sum_{l}$ means $\sum_{l\in S^k_A}$), which ensures that all terms appearing in the summations satisfy $\xxx{v}{j}{k}^2 > 0$ and $\sum_{l\in S^k_A} \xxx{v}{l}{k}^2 > 0$ for all $\theta \in (0, \theta_0)$.
With this, we see that $\Hc(\theta)$ is indeed differentiable on $(0, \theta_0)$, and

\begin{equation*}\begin{aligned}
    \abs{\Hc'(\theta)} = \Bigg| \! &- \sum_{k,j} 2\xxx{v}{j}{k}\xxx{w}{j}{k} \ln(\xxx{v}{j}{k}^2) \\
    &- \sum_{k,j} \xxx{v}{j}{k}^2 \frac{2\xxx{v}{j}{k}\xxx{w}{j}{k}}{\xxx{v}{j}{k}^2} \\
    &+ \sum_{k,j} 2\xxx{v}{j}{k}\xxx{w}{j}{k} \ln{\parenth{\sum_l \xxx{v}{l}{k}^2}} \\
    &+ \sum_{k,j} \xxx{v}{j}{k}^2 \frac{\sum_l 2\xxx{v}{l}{k}\xxx{w}{l}{k}}{\sum_l \xxx{v}{l}{k}^2} \Bigg|.
\end{aligned}\end{equation*}
Using $v(\theta) \cdot w(\theta) = 0$, this simplifies to
\begin{equation}\begin{aligned}\label{HcThetaBoundIntermediate}
    \abs{\Hc'(\theta)}
    &= 2 \abs{\sum_{k,j} \xxx{v}{j}{k}\xxx{w}{j}{k} \ln\parenth{\frac{\sum_l \xxx{v}{l}{k}^2}{\xxx{v}{j}{k}^2}}} \\
    &\leq 2 \sqrt{\sum_{k,j} \xxx{v}{j}{k}^2 \ln^2\parenth{\frac{\sum_l \xxx{v}{l}{k}^2}{\xxx{v}{j}{k}^2}}},
\end{aligned}\end{equation}
where in the last line we used the Cauchy--Schwarz inequality with $\abs{w(\theta)} = 1$.

Now, recall that the $\xxx{v}{j}{k}^2$ form a valid probability distribution (i.e.\ they are non-negative values summing to $1$), since $\abs{v}=1$. Also, note that the argument of $\ln^2$ in the final line above, i.e.\ $\frac{\sum_l \xxx{v}{l}{k}^2}{\xxx{v}{j}{k}^2}$, lies in the interval $[1,\infty)$. Thus, we now construct an increasing concave upper bound $f(x)$ for $\ln^2 x$ on $x \in [1,\infty)$, as this would allow us to ``move the summation" over $k,j$ (weighted by the probabilities $\xxx{v}{j}{k}^2$) into the argument of the function. To begin, note that
\begin{equation*}\begin{aligned}
    \frac{d}{dx} \ln^2 x &= 2 \frac{\ln x}{x} \\
    \frac{d^2}{dx^2} \ln^2 x &= 2 \frac{1-\ln x}{x^2},
\end{aligned}\end{equation*}
so $\ln^2 x$ is convex for all $x \in [1,e]$ and concave for all $x \in [e, \infty)$. Then to produce $f(x)$, we seek a line $y(x) = m(x-a)$ such that $y(1) = \ln^2(1) = 0$, and such that there exists $x_0 \in [1, \infty)$ with $x_0 \geq e$, $y(x_0) = \ln^2(x_0)$ and $y'(x_0) = \parenth{\frac{d}{dx} \ln^2 x}_{x_0} = 2 \frac{\ln x_0}{x_0}$. Then we must solve the system
\begin{equation*}
    \begin{aligned}
        \ln^2 x_0 = y(x_0) = 2\frac{\ln x_0}{x_0}(x_0-1),
    \end{aligned}
\end{equation*}
for $x_0$, which has solutions $x_0 = 1$ and $x_0 = \exp{\parenth{\lambertW{-\frac{2}{e}}}} \approx 4.922$. We discard the solution $x_0 = 1 < e$ and keep the other solution $x_0 \approx 4.922 \geq e$. Thus, our increasing concave upper bound for $\ln^2 x$ is
\begin{equation*}
    f(x) \coloneqq
    \begin{cases}
        2\frac{\ln{x_0}}{x_0}(x-1) & 1 \leq x \leq x_0 \\
        \ln^2 x & x \geq x_0
    \end{cases}.
\end{equation*}
With $f(x)$, we can write
\begin{equation}\label{HcThetaBoundFinal}
\begin{aligned}
    \abs{\Hc'(\theta)} &\leq 2 \sqrt{\sum_{k,j} \xxx{v}{j}{k}^2 \ln^2\parenth{\frac{\sum_l \xxx{v}{l}{k}^2}{\xxx{v}{j}{k}^2}}} \\
    &\leq 2 \sqrt{\sum_{k,j} \xxx{v}{j}{k}^2 f\parenth{\frac{\sum_l \xxx{v}{l}{k}^2}{\xxx{v}{j}{k}^2}}} \\
    &\leq 2 \sqrt{f\parenth{\sum_{k,j} \xxx{v}{j}{k}^2 \frac{\sum_l \xxx{v}{l}{k}^2}{\xxx{v}{j}{k}^2}}} \\
    &\leq 2 \sqrt{f\parenth{d_A \sum_{k,l} \xxx{v}{l}{k}^2}} \\
    &= 2 \sqrt{f(d_A)} \\
    &= \LipschitzConstant{d_A} \\
    &= u(d_A).
\end{aligned}
\end{equation}
In the above, the first line is Eq.~(\ref{HcThetaBoundIntermediate}). The second line follows since $f$ is an upper bound on $\ln^2 x$ for $x \geq 1$. The third line follows since $f$ is concave and $\abs{v(\theta)} = 1$. The fourth line follows since $f$ is increasing and $\abs{S_A^k} \leq d_A$ for all $k$. The fifth line follows since $\abs{v(\theta)} = 1$. The sixth and seventh lines follow from the definitions and the fact that $d_A \in \mathbb{N}$.
\end{proof}

In comparison to the proof in~\cite{SBV+21} for unconditioned entropies, the main difference in our proof here is essentially that there are additional contributions to the derivative $\Hc'(\theta)$ arising from the $\Hvn{\rho_B}$ term in the conditional entropy. Informally, these contributions act in the ``opposite direction'' from those of the $\Hvn{\rho_{AB}}$ term, reducing the magnitude of the derivative and yielding a final bound that is independent of $d_B$, in contrast to what we would have obtained had we only considered the derivative of the $\Hvn{\rho_{AB}}$ term alone --- see Eq.\ (\ref{SekatskiBoundConditional}).
Another small difference is that we have constructed the concave upper bound in a slightly different and arguably simpler way.

\subsection{Potential improvements}

How tight is the bound in Eq.~(\ref{ContinuityBound})? To address this question empirically, we began by randomly sampling $100,000$ pairs of quantum-classical states according to the procedure in Appendix~\ref{RandomQuantumClassicalStates}, and for each pair $(\rho, \sigma)$, we computed their angular distance $\A{\rho}{\sigma}$ and conditional entropy difference $\abs{\HCond{\rho} - \HCond{\sigma}}$. We then plotted the conditional entropy differences against the angular distances in Fig.~\ref{fig:RandomQuantumClassicalStates}. This provides an empirical estimate for the tightness of Eq.~(\ref{ContinuityBound}) at all feasible angular distances $\A{\rho}{\sigma} \in \left[0, \frac{\pi}{2} \right]$.

\begin{figure}
\includegraphics[width = 0.48 \textwidth]{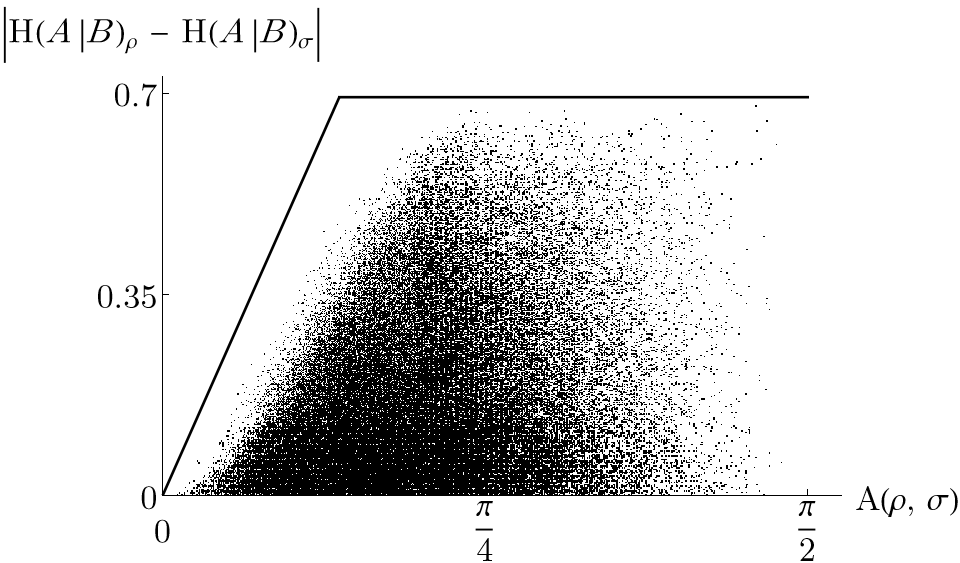}
\caption{Scatterplot of the conditional entropy difference $\abs{\HCond{\rho} - \HCond{\sigma}}$ against the angular distance $\A{\rho}{\sigma}$ for $100,000$ pairs of randomly sampled quantum-classical states $(\rho, \sigma)$ with $d_A = 2 = d_B$ (see Appendix~\ref{RandomQuantumClassicalStates}). 
For comparison, we have also plotted the curve $\min\{u(d_A) \A{\rho}{\sigma}, \ln d_A\}$, where $u(d_A) \A{\rho}{\sigma}$ is the bound in Eq.~(\ref{ContinuityBound}), and $\ln d_A$ is a hard upper bound on the conditional entropy difference between quantum-classical states.}
\label{fig:RandomQuantumClassicalStates}
\end{figure}

Next, we explore the tightness of Eq.~(\ref{ContinuityBound}) at small angular distances (since in most applications we are mainly interested in this case). For each angular distance $\A{\rho}{\sigma} \in \set{0.1 \times 10^{-5}, 0.2 \times 10^{-5}, \ldots, 1.0 \times 10^{-5}}$, we randomly sampled $10,000$ pairs of classical states $(\rho, \sigma)$ with that angular distance, as described in Appendix~\ref{RandomClassicalStatesFixedAngularDistance}. Then, we computed the conditional entropy differences $\abs{\HCond{\rho} - \HCond{\sigma}}$ and plotted them in Fig.~\ref{fig:FixedAngularDistance}. The results suggest that at small angular distances, our continuity bound is close to the ``true" tight expression when $d_A$ is small, but there may be room for improvement when $d_A$ is larger (note that random sampling typically yields less representative results in high dimensions, so the latter claim should not be taken as conclusive).

\begin{figure}
\includegraphics[width=0.48\textwidth]{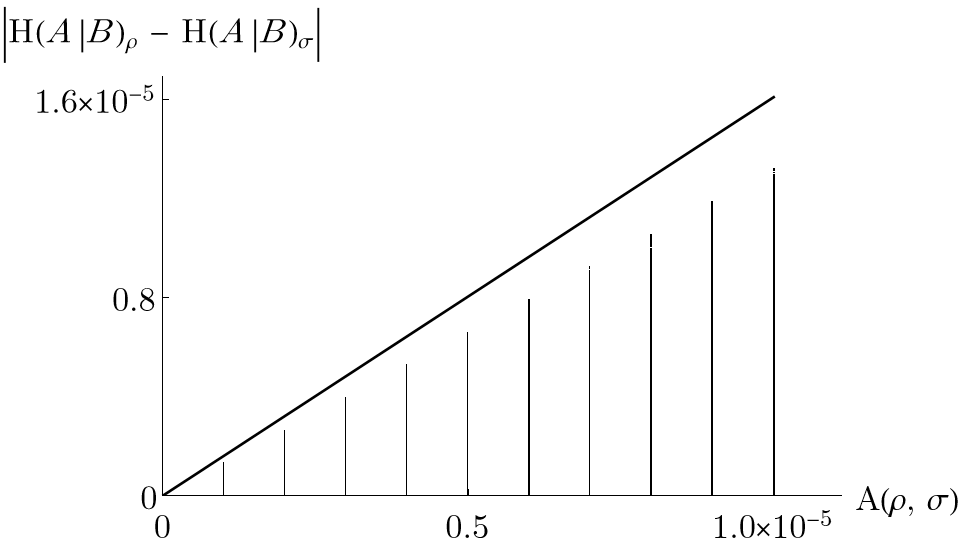}\\
\includegraphics[width=0.48\textwidth]{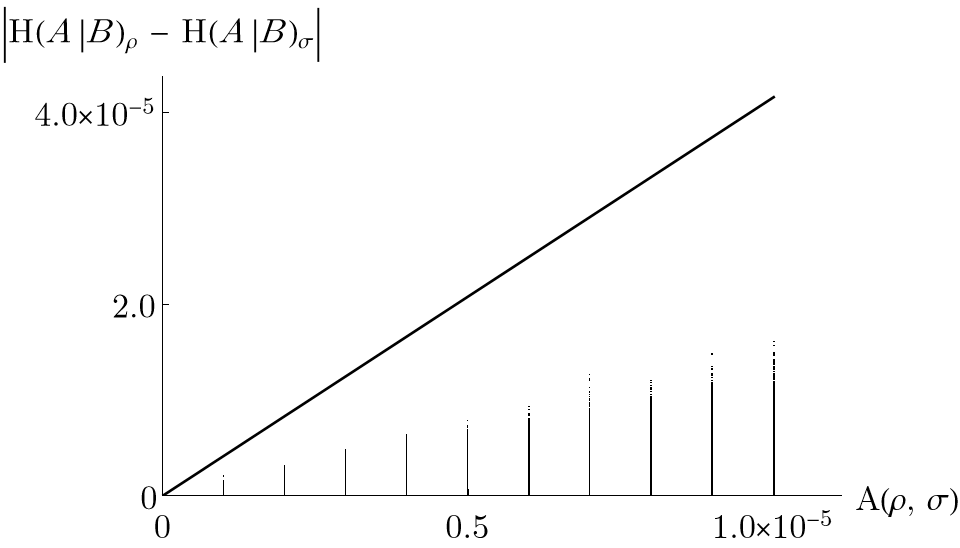}
\caption{Scatterplot of the conditional entropy difference $\abs{\HCond{\rho} - \HCond{\sigma}}$ against the angular distance $\A{\rho}{\sigma}$ for $100,000$ pairs of classical states $(\rho, \sigma)$ randomly sampled at fixed angular distances $\A{\rho}{\sigma} \in \set{0.1 \times 10^{-5}, 0.2 \times 10^{-5}, \ldots, 1.0 \times 10^{-5}}$ (see Appendix~\ref{RandomClassicalStatesFixedAngularDistance}). The slanted line shows the bound $u(d_A) \A{\rho}{\sigma}$ in Eq.~(\ref{ContinuityBound}), while the vertical ``lines'' are formed by the data points from the $100,000$ sampled pairs $(\rho, \sigma)$. Top panel: $d_A = 2 = d_B$; bottom panel: $d_A = 8$, $d_B = 2$.}
\label{fig:FixedAngularDistance}
\end{figure}

Note that in order to saturate Eq.\ (\ref{ContinuityBound}), a pair of states $(\rho, \sigma)$ must saturate both inequalities~\eqref{HcThetaBoundIntermediate} (Cauchy--Schwarz) and \eqref{HcThetaBoundFinal} (which roughly speaking is due to the concavity of $f$). However, it seems that these inequalities cannot be simultaneously saturated, which is consistent with the above empirical evidence that there is room for sharpening the bound.

It is also worth briefly comparing ``conversions'' between our result and the continuity bounds based on trace distance. Specifically, note that the Fuchs--van de Graaf inequalities~\cite{FvdG99} upper bound the trace distance in terms of angular distance and vice versa. Thus, a continuity bound in terms of either distance measure in principle yields a continuity bound in terms of the other. However, such a conversion is potentially quite suboptimal --- we provide a brief scaling comparison in Appendix~\ref{app:compare}, where we find that starting from a bound on angular distance and then applying the previous continuity bound~\eqref{WinterBound} yields highly suboptimal results, while the other direction (starting from a bound on trace distance and then applying our bound~\eqref{ContinuityBound}) is somewhat better, though still not tight.

Finally, we discuss avenues for generalizing Eq.~(\ref{ContinuityBound}), since many applications in quantum key distribution require Eq.~(\ref{ContinuityBound}) (or a similar bound satisfying both conditions \ref{DesiredBoundConditionIndependence} and \ref{DesiredBoundConditionDifferentiable}) to hold for classical-quantum states as well. However, our proof technique does not appear to generalize readily to classical-quantum (or fully quantum) states, since our proof relies on the simple eigenvalue relationship between a quantum-classical state $\rho$ and its partial trace $\trA{\rho}$. This eigenvalue relationship becomes much more complicated in the classical-quantum (or fully quantum) case, since the eigenvalues of $\trA{\rho}$ now depend on the eigenvectors of $\rho$ as well. We also highlight that as observed in~\cite{KA07}, it seems difficult to use purification-based arguments to obtain such a result, because purifications usually do not ``preserve'' the conditional entropies in a useful way --- we give some further details in Appendix~\ref{app:purification}.

In attempting to generalize Eq.~(\ref{ContinuityBound}), it is important to note that the bound does not hold for arbitrary (fully quantum) states $\rho, \sigma \in \mathcal{D}\parenth{\mathcal{H}^A \otimes \mathcal{H}^B}$. To see this, let $\set{\ket{e_j}}_j$ and $\set{\ket{f_k}}_k$ be orthonormal bases for $\mathcal{H}^A$ and $\mathcal{H}^B$, respectively. Let $d_M \coloneqq \min\set{d_A, d_B}$, and consider the maximally entangled state
\begin{equation}\label{rhoCounterexample}
    \rho = \frac{1}{d_M} \sum_{j,k=1}^{d_M} \ket{e_j f_j} \bra{e_k f_k},
\end{equation}
which has the most negative conditional entropy $\HCond{\rho} = - \ln d_M$. Next, consider the maximally mixed state
\begin{equation*}
    \tau = \frac{1}{d_A d_B} \mathbbm{1},
\end{equation*}
which has the most positive conditional entropy $\HCond{\tau} = \ln d_A$. Now, consider the segment connecting $\rho$ and $\tau$,
\begin{equation}\label{sigmaCounterexample}
    \sigma = \lambda \, \tau + \parenth{1-\lambda} \rho,
\end{equation}
where $\lambda \in [0,1]$. By direct computation, one can show that
\begin{equation*}
    \A{\rho}{\sigma} = \arccos\parenth{\sqrt{1 - \frac{d_A d_B - 1}{d_A d_B} \lambda}},
\end{equation*}
and that
\begin{equation*}\begin{aligned}
    \Big|\HCond{\rho} &- \HCond{\sigma} \Big| = - \ln d_M \\
    &+ \left(1 - \frac{d - 1}{d} \lambda \right) \ln \left(1 - \frac{d - 1}{d} \lambda \right) \\
    &+ (d-1) \frac{\lambda}{d} \ln \frac{\lambda}{d} \\
    &- (d_B-d_M) \frac{\lambda}{d_B} \ln \frac{\lambda}{d_B} \\
    &- d_M \left(\frac{\lambda }{d_B}+\frac{1 - \lambda}{d_M}\right) \ln \left(\frac{\lambda }{d_B}+\frac{1 - \lambda}{d_M}\right),
\end{aligned}\end{equation*} 
where $d \coloneqq d_A d_B$. For the counterexample to Eq.~(\ref{ContinuityBound}), let $d_A = 2 = d_B$ and $\lambda = \frac{1}{2}$. Then
\begin{equation*}
    \abs{\HCond{\rho} - \HCond{\sigma}} \approx 1.074,
\end{equation*}
but
\begin{equation*}
    u(d_A) \A{\rho}{\sigma} \approx 1.061,
\end{equation*}
in violation of Eq.~(\ref{ContinuityBound}). This situation is depicted in the top panel of Fig.~\ref{fig:Counterexample}.

\begin{figure}
\includegraphics[width=0.48\textwidth]{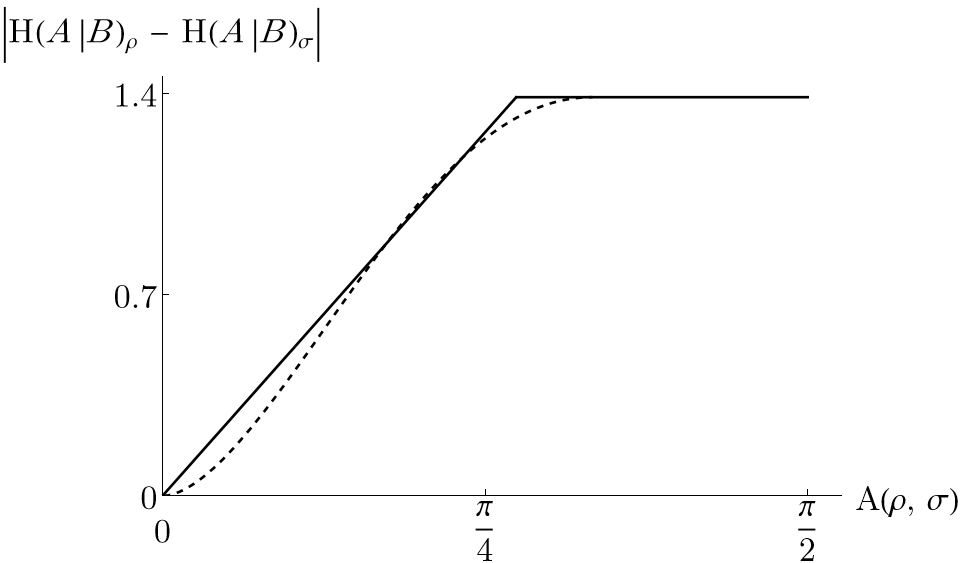}\\
\includegraphics[width=0.48\textwidth]{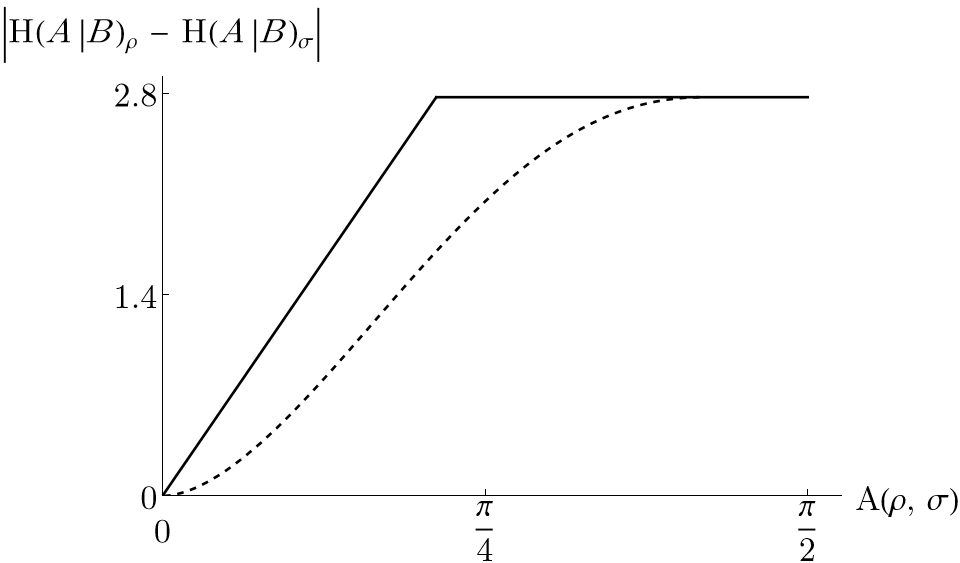}
\caption{Parametric plot (dashed line) of $\abs{\HCond{\rho} - \HCond{\sigma}}$ against $\A{\rho}{\sigma}$, where $\rho$ and $\sigma$ are as in Eqs.\ (\ref{rhoCounterexample}) and (\ref{sigmaCounterexample}) respectively, for $\lambda \in [0,1]$. For comparison, the solid line shows $\min\{u(d_A) \A{\rho}{\sigma}, \, \ln (d_A d_M) \}$ with $d_M \coloneqq \min \set{d_A, d_B}$, where $u(d_A) \A{\rho}{\sigma}$ is the bound in Eq.~(\ref{ContinuityBound}), and $\ln (d_A d_M)$ is a hard upper bound on the conditional entropy difference between fully quantum states. Top panel: $d_A = 2 = d_B$; bottom panel: $d_A = 8$, $d_B = 2$.}
\label{fig:Counterexample}
\end{figure}

However, it appears that violations of Eq.~(\ref{ContinuityBound}) are uncommon and relatively small in magnitude --- the above is the only and the most egregious counterexample to Eq.~(\ref{ContinuityBound}) known to the authors. Also, numerical computation shows that for every choice of $d_A \in \set{2,\ldots,10}$, every choice of $d_B \in \set{1,\ldots,10}$, and for $\lambda \in [0,1]$, the above construction produces a counterexample only when $d_A = 2 = d_B$ and $\lambda \approx 0.5$ (see the bottom panel of Fig.~\ref{fig:Counterexample} for a representative example). Thus, a slight modification of the bound in Eq.~(\ref{ContinuityBound}), perhaps by including an extra factor of $2$ (i.e.\ $u(d) \to 2 \, u(d)$ in Eq.~(\ref{LipschitzConstant})), will resolve the only counterexample known to the authors and may generalize the bound in Eq.~(\ref{ContinuityBound}) to all states $\rho, \sigma \in \mathcal{D}\parenth{\mathcal{H}^A \otimes \mathcal{H}^B}$.

\section{Trace Distance and Angular Distance}\label{sec:FvdG}

We now shift our attention to the relationship between the trace distance and the angular distance. Specifically, we study the set of states which saturate the Fuchs--van de Graaf inequalities \cite{FvdG99}
\begin{equation}\label{QuantumFvdGInequalities}
    1 - \F{\rho}{\sigma} \leq \T{\rho}{\sigma} \leq \sqrt{1 - \F{\rho}{\sigma}^2}.
\end{equation}
To motivate this direction of investigation, note that Audenaert's continuity bound for the von Neumann entropy in Eq.~(\ref{AudenaertBound}) is tight \cite{KA07}. That is, for any $T \in [0,1]$, there exist $\rho, \sigma \in \mathcal{D}(\mathcal{H})$ such that $\T{\rho}{\sigma} = T$ and $\abs{\Hvn{\rho} - \Hvn{\sigma}} = T \ln\parenth{d-1} + \hbinary{T}$. But suppose that some such $\rho, \sigma$ which saturate Eq.~(\ref{AudenaertBound}) also saturate the right-hand side of Eq.~(\ref{QuantumFvdGInequalities}) (which we refer to as the ``upper Fuchs--van de Graaf inequality"), i.e.
\begin{equation*}
    \T{\rho}{\sigma} = \sqrt{1 - \F{\rho}{\sigma}^2} = \sin \A{\rho}{\sigma}.
\end{equation*}
Then for $\T{\rho}{\sigma} \ll 1$, we have $\T{\rho}{\sigma} = \sin \A{\rho}{\sigma} \approx \A{\rho}{\sigma} \eqqcolon A$. But then it is impossible to obtain a continuity bound for the conditional entropy which satisfies both condition~\ref{DesiredBoundConditionIndependence} and condition~\ref{DesiredBoundConditionDifferentiable}, since whenever $d_B = 1$, such a pair $\rho, \sigma$ would satisfy 
\begin{equation*}\begin{aligned}
    \abs{\HCond{\rho} - \HCond{\sigma}} &= \abs{\Hvn{\rho} - \Hvn{\sigma}} \\
    &= T \ln\parenth{d-1} + \hbinary{T} \\
    &\approx A \ln\parenth{d-1} + \hbinary{A},
\end{aligned}\end{equation*}
so the conditional entropy scales badly with the angular distance at $A \approx 0$, and condition~\ref{DesiredBoundConditionDifferentiable} would not be satisfiable.

\subsection{Invertible states}\label{sec:InvertibleStates}

With this motivation, our goal is now to characterize the set of all states which saturate the Fuchs--van de Graaf inequalities. We begin by studying the subset of such states that are also invertible (i.e.\ positive definite), since this case is easier to handle.
That is, we characterize the sets
\begin{equation}\label{S1Def}
    \mathcal{S}_1 \coloneqq \set{(\rho, \sigma) \in \mathcal{D}_{\text{inv}}(\mathcal{H})^2 \mid 1 - \F{\rho}{\sigma} = \T{\rho}{\sigma}}
\end{equation}
and
\begin{equation}\label{S2Def}
    \mathcal{S}_2 \coloneqq \set{(\rho, \sigma) \in \mathcal{D}_{\text{inv}}(\mathcal{H})^2 \biggm| \T{\rho}{\sigma} = \sqrt{1 - \F{\rho}{\sigma}^2}},
\end{equation}
where $\mathcal{D}_{\text{inv}}(\mathcal{H})$ denotes the set of all invertible density operators on $\mathcal{H}$. To do this, we apply the following general strategy. A generic inequality $\G_1(\rho, \sigma) \leq \G_n(\rho, \sigma)$ is usually proven via a chain of inequalities
\begin{equation*}
    \G_1(\rho, \sigma) \leq \G_2(\rho, \sigma) \leq \ldots \leq \G_{n-1}(\rho, \sigma) \leq  \G_n(\rho, \sigma).
\end{equation*}
Thus, to determine which states $\rho, \sigma$ satisfy $\G_1(\rho, \sigma) = \G_n(\rho, \sigma)$, we derive the equality conditions for each inequality $\G_j(\rho, \sigma) \leq \G_{j+1}(\rho, \sigma)$ in the above chain of inequalities, and we combine the equality conditions (with logical and) for all $j \in \set{1, \ldots, n-1}$ to obtain the equality condition for $\G_1(\rho, \sigma) \leq \G_n(\rho, \sigma)$. However, there usually exist multiple different proofs of the inequality $\G_1(\rho, \sigma) \leq \G_n(\rho, \sigma)$, which each proceed through alternative chains of intermediate inequalities
\begin{equation*}
    \G_1(\rho, \sigma) \leq \tilde{\G}_2(\rho, \sigma) \leq \ldots \leq \tilde{\G}_{n-1}(\rho, \sigma) \leq  \G_n(\rho, \sigma).
\end{equation*}
Thus, a careful consideration of the various proof techniques for a given inequality is needed for analyzing equality conditions, since different proof techniques may yield equality conditions vastly different in appearance. Of course, all equality conditions for the same inequality $\G_1(\rho, \sigma) \leq \G_n(\rho, \sigma)$ should be logically equivalent regardless of the underlying proof technique, but some equality conditions may not be ``compatible" with others in our overall proof, making it more difficult to condense all the intermediate equality conditions into a final concise characterization.

To proceed, we need to introduce a few definitions. First, for any probability distributions $p$ and $q$ on a finite alphabet $\mathcal{X}$, we denote their classical trace distance by
\begin{equation*}
    \Tc{p}{q} \coloneqq \frac{1}{2} \sum_{\xinX} \abs{p(x) - q(x)}
\end{equation*}
and their classical fidelity by
\begin{equation*}
    \Fc{p}{q} \coloneqq \sum_{\xinX} \sqrt{p(x)q(x)},
\end{equation*}
which are special cases of Eqs.\ (\ref{TraceDistanceDef}) and (\ref{FidelityDef}) for commuting density operators. Next, we denote the set of all rank-1 projective measurements on a Hilbert space $\mathcal{H}$ by
\begin{equation*}\begin{aligned}
    \mathcal{M} \coloneqq \Big\{ \! &\set{\ketbra{e_x}{e_x}}_{\xinX} \subseteq \mathcal{L}(\mathcal{H}) \Bigm| \\
    &\set{\ket{e_x}}_{\xinX} \textup{ is an orthonormal basis for } \mathcal{H} \Big\},
\end{aligned}\end{equation*}
where $\mathcal{L}(\mathcal{H})$ denotes the space of linear operators on $\mathcal{H}$. Now, each $\Lambda \coloneqq \set{\ketbra{e_x}{e_x}}_{\xinX} \in \mathcal{M}$ and $\tau \in \mathcal{D}(\mathcal{H})$ induce a natural probability distribution
\begin{equation*}
    \tr{\Lambda \tau} \coloneqq \parenth{\tr{\ketbra{e_x}{e_x} \tau}}_{\xinX} = \parenth{\braket{e_x}{\tau \, e_x}}_{\xinX} \in \mathbb{R}^d,
\end{equation*}
where $d \coloneqq \dim \mathcal{H}$. That is, this probability distribution is given by
\begin{equation*}
    \tr{\Lambda \tau}(x) \coloneqq \tr{\ketbra{e_x}{e_x} \tau} = \braket{e_x}{\tau \, e_x}
\end{equation*}
for all $\xinX$. Finally, for any positive definite operators $A$ and $B$ on a Hilbert space $\mathcal{H}$, we denote their geometric mean \cite{Ando04} by
\begin{equation}\label{eq:geomean}
    A \# B \coloneqq \power{A}{} \sqrt{\power{A}{-} B \power{A}{-}} \power{A}{} = B \# A.
\end{equation}

Now, to obtain equality conditions for Eq.~(\ref{QuantumFvdGInequalities}), we found that Fuchs and van de Graaf's original proof technique in \cite{FvdG99} seemed amenable for analysis. Their proof of Eq.~(\ref{QuantumFvdGInequalities}) invoked a variational characterization of the trace distance, a variational characterization of the fidelity, and a classical version of the bound in Eq.~(\ref{QuantumFvdGInequalities}). We first introduce these previously known results in turn. Then, we derive new equality conditions for these results as lemmas. Finally, we combine the lemmas in Theorem \ref{QuantumFvdGTheorem} to obtain characterizations of the sets $\mathcal{S}_1$ and $\mathcal{S}_2$ in Eqs.\ (\ref{S1Def}) and (\ref{S2Def}). Proofs of the lemmas have been deferred to Appendix~\ref{app:proofsFvdG} for readability.

We first consider the variational characterization of the trace distance, which is stated in \cite{FvdG99} and can be traced back to the work of Helstrom and Toussaint \cite{Toussaint72,Helstrom76}. For any $\rho, \sigma \in \mathcal{D}(\mathcal{H})$, their trace distance can be expressed as
\begin{equation}\label{TraceDistanceOptimization}
    \T{\rho}{\sigma} = \max_{\Lambda \in \mathcal{M}} \Tc{\tr{\Lambda \rho}}{\tr{\Lambda \sigma}}.
\end{equation}
We define
\begin{equation*}
    \mathcal{T}(\rho, \sigma) \coloneqq \set{\Lambda \in \mathcal{M} \mid \T{\rho}{\sigma} = \Tc{\tr{\Lambda \rho}}{\tr{\Lambda \sigma}}}
\end{equation*}
to be the (nonempty) set of all rank-1 projective measurements which achieve the optimum in Eq.~(\ref{TraceDistanceOptimization}). We characterize this set in the following lemma.

\begin{lemma}\label{VariationalTraceDistanceLemma}

For any $\rho, \sigma \in \mathcal{D}(\mathcal{H})$, we have
\begin{equation*}\begin{aligned}
    \mathcal{T}(\rho, \sigma) = \Big\{ &\Lambda \coloneqq \set{\ketbra{e_x}{e_x}}_{\xinX} \in \mathcal{M} \Bigm| \\ &\forall \xinX,\ket{e_x} \in \ker P \lor \ket{e_x} \in \ker Q \Big\} \neq \emptyset,
\end{aligned}\end{equation*}
where $P$ and $Q$ are the positive and negative parts of $\rho - \sigma$, respectively.

\end{lemma}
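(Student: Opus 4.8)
The plan is to read off the equality condition for Eq.~(\ref{TraceDistanceOptimization}) from an elementary term-by-term estimate, rather than from the details of any particular proof of that identity. First I would write $\rho - \sigma = P - Q$, where $P$ and $Q$ are the positive and negative parts, which have orthogonal supports; since $\tr{\rho - \sigma} = 0$ this gives $\tr{P} = \tr{Q}$, and hence $\T{\rho}{\sigma} = \frac{1}{2}\norm{\rho - \sigma}_1 = \frac{1}{2}\parenth{\tr{P} + \tr{Q}} = \tr{P}$. Then, for an arbitrary $\Lambda = \set{\ketbra{e_x}{e_x}}_{\xinX} \in \mathcal{M}$, I would express the induced probability differences as $\tr{\Lambda \rho}(x) - \tr{\Lambda \sigma}(x) = \braket{e_x}{(\rho - \sigma) \, e_x} = a_x - b_x$, where $a_x \coloneqq \braket{e_x}{P \, e_x} \geq 0$ and $b_x \coloneqq \braket{e_x}{Q \, e_x} \geq 0$, so that
\begin{equation*}
    \Tc{\tr{\Lambda \rho}}{\tr{\Lambda \sigma}} = \frac{1}{2} \sum_{\xinX} \abs{a_x - b_x}.
\end{equation*}

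The key inequality is then $\abs{a_x - b_x} \leq a_x + b_x$ for non-negative reals, with equality if and only if $a_x b_x = 0$. Summing over $\xinX$ and using $\sum_{\xinX} a_x = \tr{P}$ and $\sum_{\xinX} b_x = \tr{Q}$ recovers $\Tc{\tr{\Lambda \rho}}{\tr{\Lambda \sigma}} \leq \tr{P} = \T{\rho}{\sigma}$ (so this argument also reproves the ``$\leq$'' direction of Eq.~(\ref{TraceDistanceOptimization})), and shows that $\Lambda \in \mathcal{T}(\rho, \sigma)$ exactly when every term is tight, i.e.\ when $a_x b_x = 0$ for all $x$. To conclude, I would translate this back into operator language: since $P \geq 0$, we have $a_x = \norm{\sqrt{P} \ket{e_x}}^2$, so $a_x = 0$ if and only if $\sqrt{P} \ket{e_x} = 0$, i.e.\ $\ket{e_x} \in \ker P$; similarly $b_x = 0$ if and only if $\ket{e_x} \in \ker Q$. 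Hence $a_x b_x = 0$ is equivalent to ``$\ket{e_x} \in \ker P$ or $\ket{e_x} \in \ker Q$'', which is precisely the membership condition in the lemma. Non-emptiness then follows by exhibiting one admissible $\Lambda$: any rank-1 projective measurement in an orthonormal eigenbasis of the Hermitian operator $\rho - \sigma$ works, since each eigenvector with nonnegative eigenvalue lies in $\ker Q$ and each eigenvector with nonpositive eigenvalue lies in $\ker P$.

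I expect no genuine obstacle in this lemma; the only points needing care are the choice to compare term by term against $a_x + b_x$ --- rather than, say, optimizing over which outcomes to ``accept'' into a subset --- so that the equality conditions decouple cleanly across $x$, and the elementary observation that the quadratic form $\braket{e_x}{P \, e_x}$ vanishes exactly when $\ket{e_x} \in \ker P$, which relies on $P \geq 0$.
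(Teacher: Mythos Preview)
Your proposal is correct and follows essentially the same approach as the paper: both arguments bound $\abs{\braket{e_x}{(P-Q)e_x}}$ by $\braket{e_x}{(P+Q)e_x}$ term by term, read off the equality condition $\braket{e_x}{P e_x}\braket{e_x}{Q e_x}=0$, translate it to the kernel statement via positive semidefiniteness, and exhibit an eigenbasis of $\rho-\sigma$ for non-emptiness. The only cosmetic difference is that you introduce the shorthand $a_x,b_x$ and note $\tr{P}=\tr{Q}$ explicitly, whereas the paper keeps the operator expressions throughout.
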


\begin{proof}
Appendix~\ref{ProofVariationalTraceDistanceLemma}.
\end{proof}

Next, we consider the variational characterization of the fidelity, which was first introduced by Fuchs and Caves in \cite{FC95}. For any $\rho, \sigma \in \mathcal{D}(\mathcal{H})$, their fidelity can be expressed as
\begin{equation}\label{FidelityOptimization}
    \F{\rho}{\sigma} = \min_{\Lambda \in \mathcal{M}} \Fc{\tr{\Lambda \rho}}{\tr{\Lambda \sigma}}.
\end{equation}
We define
\begin{equation*}
    \mathcal{F}(\rho, \sigma) \coloneqq \set{\Lambda \in \mathcal{M} \mid \F{\rho}{\sigma} = \Fc{\tr{\Lambda \rho}}{\tr{\Lambda \sigma}}}
\end{equation*}
to be the (nonempty) set of all rank-1 projective measurements which achieve the optimum in Eq.~(\ref{FidelityOptimization}). We characterize this set for invertible $\rho$ and $\sigma$ in the following lemma.

\begin{lemma}\label{VariationalFidelityLemma}

For any invertible $\rho, \sigma \in \mathcal{D}_{\textup{inv}}(\mathcal{H})$, we have
\begin{equation*}\begin{aligned}
    \mathcal{F}(\rho, \sigma) = \Big\{ &\Lambda \coloneqq \set{\ketbra{e_x}{e_x}}_{\xinX} \in \mathcal{M} \Bigm| \\ &\set{\ket{e_x}}_{\xinX} \textup{ is an eigenbasis for } M \Big\} \neq \emptyset,
\end{aligned}\end{equation*}
where
\begin{equation*}
    M \coloneqq \rho^{-1} \# \sigma = \M
\end{equation*}
is the operator geometric mean between $\rho^{-1}$ and $\sigma$.

\end{lemma}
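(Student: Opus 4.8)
The plan is to find the minimizing measurements in the variational characterization \eqref{FidelityOptimization} by first reducing the classical fidelity $\Fc{\tr{\Lambda\rho}}{\tr{\Lambda\sigma}}$ to a form involving a single positive operator, then invoking a known optimality condition. The starting point is a standard manipulation: for $\Lambda = \set{\ketbra{e_x}{e_x}}_{\xinX}$, write $\Fc{\tr{\Lambda\rho}}{\tr{\Lambda\sigma}} = \sum_x \sqrt{\braket{e_x}{\rho\,e_x}\braket{e_x}{\sigma\,e_x}}$, and apply Cauchy--Schwarz in the form $\braket{e_x}{\rho\,e_x}\braket{e_x}{\sigma\,e_x} \geq \abs{\braket{e_x}{\sqrt{\rho}\,X\sqrt{\sigma}\,e_x}}^2$ for any unitary $X$, or more directly use the fact (from \cite{FC95}) that the minimum over $\Lambda$ of the classical fidelity equals $\tr{\sqrt{\sqrt{\rho}\,\sigma\sqrt{\rho}}}$ and is attained. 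The key structural fact to extract is that, for invertible $\rho$, the optimal measurement vectors are exactly the eigenvectors of the operator $M = \M$; equivalently, introducing $M = \rho^{-1}\#\sigma$ via \eqref{eq:geomean}, one checks $\rho^{-1}\#\sigma = \power{\rho}{-}\sqrt{\power{\rho}{}\,\sigma\,\power{\rho}{}}\,\power{\rho}{-}$... wait, more carefully: $\rho^{-1}\#\sigma = (\rho^{-1})^{1/2}\sqrt{(\rho^{-1})^{-1/2}\sigma(\rho^{-1})^{-1/2}}\,(\rho^{-1})^{1/2} = \power{\rho}{-}\sqrt{\srss\cdot\ldots}$, i.e. one shows $M = \M$ by a direct similarity computation, noting $\power{\rho}{-}\K\power{\rho}{-}$ and $\sqrt{\power{\rho}{}\sigma\power{\rho}{}}$ are related by conjugation and have the same spectrum.

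The concrete steps I would carry out are as follows. First, recall or re-derive the Fuchs--Caves result that $\min_\Lambda \Fc{\tr{\Lambda\rho}}{\tr{\Lambda\sigma}} = \F{\rho}{\sigma}$; this fixes the value of the optimum. Second, analyze the equality condition in the chain of inequalities used to prove it: the relevant inequality is an application of Cauchy--Schwarz (or the operator-monotonicity/concavity argument), and equality forces each $\ket{e_x}$ to be an eigenvector of a specific operator. Tracking which operator this is: one writes $\F{\rho}{\sigma} = \tr{\sqrt{\rho}\,U\,\sqrt{\sigma}}$ for the optimal unitary $U$ from the polar decomposition of $\srss$, and the saturation condition of Cauchy--Schwarz applied termwise demands $\sqrt{\sigma}\,U^\dagger\,\sqrt{\rho}\,\ket{e_x} \propto \ket{e_x}$ up to the right phases, which after rearrangement (using invertibility to move $\sqrt{\rho}$, $\sqrt{\sigma}$ around) becomes the statement that $\ket{e_x}$ is an eigenvector of $\power{\rho}{-}\K\power{\rho}{-}$. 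Third, verify conversely that any eigenbasis of $M$ does achieve the minimum, by direct substitution into the classical fidelity expression and checking it collapses to $\tr{M\rho} = \tr{\K} = \F{\rho}{\sigma}$ using the spectral decomposition of $M$; this also shows the set is nonempty since $M$ is self-adjoint and positive definite, hence has an orthonormal eigenbasis. Fourth, confirm the identity $M = \rho^{-1}\#\sigma = \M$ from \eqref{eq:geomean}, which is a short algebraic check that both sides have the form (positive operator conjugated by $\power{\rho}{-}$) with the same inner radicand.

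The main obstacle I anticipate is the second step — precisely pinning down the operator whose eigenvectors are forced by the equality condition, and in particular handling the phases and the role of the optimal unitary $U$. Naively, the Cauchy--Schwarz saturation gives eigenvector conditions with respect to $\sqrt{\rho}\,U\,\sqrt{\sigma}$ (a non-normal operator), and one must argue that these coincide with eigenvectors of the self-adjoint operator $M$; this requires using invertibility of $\rho$ and $\sigma$ to show that the polar-decomposition unitary $U$ is itself determined by $\rho,\sigma$ in a way that makes $\power{\rho}{-}\sqrt{\rho}\,U\,\sqrt{\sigma}\,\power{\rho}{}$ equal to $M$ (or similar). A clean route around this is to avoid $U$ altogether: use the fact that $\Fc{\tr{\Lambda\rho}}{\tr{\Lambda\sigma}} \geq \tr{\Lambda\,(\rho\#\sigma)}$ is false in general but $\sum_x\sqrt{\braket{e_x}{\rho e_x}\braket{e_x}{\sigma e_x}} \geq \sum_x \braket{e_x}{(\rho\#\sigma)e_x}$ does hold by the arithmetic–geometric–mean-type operator inequality $\braket{e_x}{\rho e_x}\braket{e_x}{\sigma e_x}\geq \braket{e_x}{(\rho\#\sigma)e_x}^2$ applied to the $2\times 2$ compression (this is where the geometric mean $\#$ naturally enters), so $\Fc{\tr{\Lambda\rho}}{\tr{\Lambda\sigma}} \geq \tr{\rho\#\sigma} = \F{\rho}{\sigma}$, with equality iff each $\ket{e_x}$ saturates the scalar inequality, which happens precisely when $\ket{e_x}$ is an eigenvector of the relative operator $\power{\rho}{-}\sqrt{\power{\rho}{}\sigma\power{\rho}{}}\power{\rho}{-} = M$. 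This reformulation makes the geometric-mean operator $M$ appear organically and sidesteps the phase bookkeeping, so I would structure the proof in Appendix~\ref{ProofVariationalFidelityLemma} along those lines.
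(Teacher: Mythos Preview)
Your outlined steps 1--4 (polar decomposition of $\srss$, insertion of a resolution of the identity, triangle inequality plus Cauchy--Schwarz, and tracking of the equality conditions) are correct and are exactly what the paper does. The paper handles the phase issue you anticipate by using invertibility of $\rho$ and $\sigma$: once Cauchy--Schwarz forces $\ketbra{e_x}{e_x}\sqrt{\rho} = c_x\,\ketbra{e_x}{e_x}\sqrt{\sigma}\,U$ for some $c_x\in\mathbb{C}\setminus\{0\}$, one rewrites this as $\bra{e_x}M = c_x^{-1}\bra{e_x}$ and then observes that the nonnegativity condition from the triangle-inequality step becomes $\overline{c_x}\,\braket{e_x}{M\rho M\,e_x} = \overline{c_x}\,\braket{e_x}{\sigma\,e_x} \geq 0$, which forces $c_x>0$ since $\sigma>0$. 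So the phase bookkeeping is short and is precisely where invertibility does its work.

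Your proposed ``clean route around this'' via the operator geometric mean $\rho\#\sigma$, however, has a genuine error: the claimed identity $\tr{\rho\#\sigma} = \F{\rho}{\sigma}$ is \emph{false} whenever $[\rho,\sigma]\neq 0$. The quantity $\tr{\rho\#\sigma}$ is the Matsumoto (geometric) fidelity, which satisfies $\tr{\rho\#\sigma} \leq \F{\rho}{\sigma}$ with equality only in the commuting case. Hence, while your compression inequality $\braket{e_x}{\rho\,e_x}\braket{e_x}{\sigma\,e_x} \geq \braket{e_x}{(\rho\#\sigma)\,e_x}^2$ is valid, summing it gives only $\Fc{\tr{\Lambda\rho}}{\tr{\Lambda\sigma}} \geq \tr{\rho\#\sigma}$, a bound strictly below $\F{\rho}{\sigma}$ in general, and its saturation characterizes the optimal measurements for the wrong fidelity-like quantity rather than $\mathcal{F}(\rho,\sigma)$. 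You should drop this alternative and carry through the polar-decomposition argument you sketched first.
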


\begin{proof}
Appendix~\ref{ProofVariationalFidelityLemma}.
\end{proof}

Finally, we consider the classical analogue of Eq.~(\ref{QuantumFvdGInequalities}). As shown in \cite{FvdG99}, for any probability distributions $p$ and $q$ on a finite alphabet $\mathcal{X}$, the classical trace distance and classical fidelity are related by
\begin{equation}\label{ClassicalFvdGInequalities}
    1 - \Fc{p}{q} \leq \Tc{p}{q} \leq \sqrt{1 - \Fc{p}{q}^2}.
\end{equation}
We define
\begin{equation*}\begin{aligned}
    \mathcal{C}_1(\rho, \sigma) \coloneqq \Big\{ \Lambda \in \mathcal{M} \Bigm| \, 1 &- \Fc{\tr{\Lambda \rho}}{\tr{\Lambda \sigma}} \\
    &= \Tc{\tr{\Lambda \rho}}{\tr{\Lambda \sigma}} \Big\}
\end{aligned}\end{equation*}
and
\begin{equation*}\begin{aligned}
    \mathcal{C}_2(\rho, \sigma) \coloneqq \bigg\{ \Lambda \in \mathcal{M} \biggm| \, &\Tc{\tr{\Lambda \rho}}{\tr{\Lambda \sigma}} \\
    &= \sqrt{1 - \Fc{\tr{\Lambda \rho}}{\tr{\Lambda \sigma}}^2}\bigg\}
\end{aligned}\end{equation*}
to be the sets of rank-1 projective measurements which induce classical probability distributions $p \coloneqq \tr{\Lambda \rho}$ and $q \coloneqq \tr{\Lambda \sigma}$ which saturate the left-hand and right-hand inequalities in Eq.~(\ref{ClassicalFvdGInequalities}), respectively. We characterize these sets in the following lemma.

\begin{lemma}\label{ClassicalFvdGInequalitiesLemma}

For any $\rho, \sigma \in \mathcal{D}(\mathcal{H})$, we have
\begin{equation*}\begin{aligned}
    \mathcal{C}_1(\rho, \sigma) = \Big\{ &\Lambda \coloneqq \set{\ketbra{e_x}{e_x}}_{\xinX} \in \mathcal{M} \Bigm| \\ &\forall \xinX, \braket{e_x}{\rho \, e_x} = \braket{e_x}{\sigma \, e_x} \\
    &\lor \braket{e_x}{\rho \, e_x} = 0 \lor \braket{e_x}{\sigma \, e_x} = 0 \Big\}
\end{aligned}\end{equation*}
and
\begin{equation*}\begin{aligned}
    \mathcal{C}_2(\rho, \sigma) &= \\
    \Bigg\{ \Lambda &\coloneqq \set{\ketbra{e_x}{e_x}}_{\xinX} \in \mathcal{M} \Biggm| \\
    &\tr{\Lambda \rho} = \tr{\Lambda \sigma} \lor \tr{\Lambda \rho} \cdot \tr{\Lambda \sigma} = 0 \, \lor \\ &\parenth{\begin{aligned} &\exists b \in (0,1), \forall \xinX, \braket{e_x}{\sigma \, e_x} = b \braket{e_x}{\rho \, e_x} \\ &\lor \braket{e_x}{\sigma \, e_x} = \frac{1}{b} \braket{e_x}{\rho \, e_x} \end{aligned}} \Bigg\}.
\end{aligned}\end{equation*}

\end{lemma}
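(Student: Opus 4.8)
The plan is to recognize that both $\mathcal{C}_1(\rho,\sigma)$ and $\mathcal{C}_2(\rho,\sigma)$ depend on $(\rho,\sigma)$ only through the induced probability distributions $p \coloneqq \tr{\Lambda\rho}$ and $q \coloneqq \tr{\Lambda\sigma}$, so it suffices to characterize, for arbitrary probability distributions $p,q$ on $\mathcal{X}$, the equality conditions for the two inequalities in Eq.~(\ref{ClassicalFvdGInequalities}). The characterizations in the lemma then follow by substituting $p(x) = \braket{e_x}{\rho\,e_x}$ and $q(x) = \braket{e_x}{\sigma\,e_x}$, together with the observation that, for nonnegative vectors, $\tr{\Lambda\rho}\cdot\tr{\Lambda\sigma} = \sum_{\xinX}p(x)q(x) = 0$ is equivalent to $p$ and $q$ having disjoint supports. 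The key identities, valid because $p$ and $q$ are normalized, are $\sum_{\xinX}\parenth{\sqrt{p(x)}-\sqrt{q(x)}}^2 = 2\parenth{1-\Fc{p}{q}}$ and $\sum_{\xinX}\parenth{\sqrt{p(x)}+\sqrt{q(x)}}^2 = 2\parenth{1+\Fc{p}{q}}$, together with $\Tc{p}{q} = \frac12\sum_{\xinX}\abs{\sqrt{p(x)}-\sqrt{q(x)}}\parenth{\sqrt{p(x)}+\sqrt{q(x)}}$.

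For $\mathcal{C}_1$, I would prove $1 - \Fc{p}{q} \leq \Tc{p}{q}$ by a termwise comparison: rewriting the left-hand side as $\frac12\sum_{\xinX}\parenth{\sqrt{p(x)}-\sqrt{q(x)}}^2$, it is enough that $\parenth{\sqrt{p(x)}-\sqrt{q(x)}}^2 \leq \abs{\sqrt{p(x)}-\sqrt{q(x)}}\parenth{\sqrt{p(x)}+\sqrt{q(x)}}$ for each $x$, which holds since $\abs{a-b}\leq a+b$ for $a,b\geq 0$. Equality at index $x$ forces the term to vanish (so $p(x) = q(x)$) or $\abs{\sqrt{p(x)}-\sqrt{q(x)}} = \sqrt{p(x)}+\sqrt{q(x)}$, i.e.\ $\min\set{p(x),q(x)} = 0$; taking the conjunction over all $x$ gives exactly the condition defining $\mathcal{C}_1(\rho,\sigma)$.

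For $\mathcal{C}_2$, I would apply the Cauchy--Schwarz inequality to the vectors $\parenth{\abs{\sqrt{p(x)}-\sqrt{q(x)}}}_{\xinX}$ and $\parenth{\sqrt{p(x)}+\sqrt{q(x)}}_{\xinX}$, which gives $\Tc{p}{q} \leq \frac12\sqrt{2\parenth{1-\Fc{p}{q}}}\sqrt{2\parenth{1+\Fc{p}{q}}} = \sqrt{1-\Fc{p}{q}^2}$. Since the second vector is never zero ($\sum_{\xinX}q(x) = 1$), equality forces the first vector to equal $\mu$ times the second for some $\mu \geq 0$, i.e.\ $\abs{\sqrt{p(x)}-\sqrt{q(x)}} = \mu\parenth{\sqrt{p(x)}+\sqrt{q(x)}}$ for all $x$. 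Case analysis on $\mu$ then yields: $\mu = 0 \Rightarrow p = q$; $\mu = 1 \Rightarrow \min\set{p(x),q(x)} = 0$ for all $x$ (disjoint supports); $\mu > 1$ is impossible since $\abs{a-b}\leq a+b$; and for $\mu \in (0,1)$, solving the relation at each $x$ (splitting on the sign of $\sqrt{p(x)}-\sqrt{q(x)}$, and noting that $p(x) = 0 \Leftrightarrow q(x) = 0$ in this regime) gives $q(x) = b\,p(x)$ or $q(x) = \tfrac1b p(x)$ with $b \coloneqq \parenth{\tfrac{1-\mu}{1+\mu}}^2 \in (0,1)$ --- precisely the third disjunct in the definition of $\mathcal{C}_2(\rho,\sigma)$. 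The reverse inclusion, i.e.\ that each of the three conditions does produce equality in Cauchy--Schwarz (hence in the upper Fuchs--van de Graaf bound), is a short direct check.

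I expect the only real difficulty to be bookkeeping: matching the output of the $\mu$-case analysis to the exact disjunctive form stated for $\mathcal{C}_2(\rho,\sigma)$ --- in particular establishing the $\mu \leftrightarrow b$ correspondence, verifying that the three disjuncts are jointly exhaustive, and handling the degenerate indices with $p(x) = q(x) = 0$ as well as the translation between ``disjoint support'' and the dot-product condition $\tr{\Lambda\rho}\cdot\tr{\Lambda\sigma} = 0$. None of the inequalities used is more than elementary.
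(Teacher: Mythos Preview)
Your proposal is correct and follows essentially the same argument as the paper: the same termwise comparison $|\sqrt{p(x)}-\sqrt{q(x)}|^2 \leq |\sqrt{p(x)}-\sqrt{q(x)}|(\sqrt{p(x)}+\sqrt{q(x)})$ for $\mathcal{C}_1$, and the same Cauchy--Schwarz inequality applied to the vectors $(|\sqrt{p(x)}-\sqrt{q(x)}|)_x$ and $(\sqrt{p(x)}+\sqrt{q(x)})_x$ for $\mathcal{C}_2$, with the identical case split on the proportionality constant (your $\mu$ is the paper's $a$) and the same substitution $b = \bigl(\tfrac{1-\mu}{1+\mu}\bigr)^2$.
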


\begin{proof}
Appendix~\ref{ProofClassicalFvdGInequalitiesLemma}.
\end{proof}

We are now ready to state the main result of this section: equality conditions for the left-hand and right-hand inequalities in Eq.~(\ref{QuantumFvdGInequalities}). Intuitively, Theorem \ref{QuantumFvdGTheorem} says that for invertible states, the lower Fuchs--van de Graaf inequality is saturated only for trivial cases, while the upper Fuchs--van de Graaf inequality is saturated exactly when the geometric mean operator $M \coloneqq \rho^{-1} \# \sigma$ and the difference operator $\rho - \sigma$ are simultaneously diagonalizable, \emph{and} there exists some constant $c \in (0,1]$ such that the eigenvalues of $M$ are all either $c$ or $\frac{1}{c}$.

\begin{theorem}\label{QuantumFvdGTheorem}

With $\mathcal{S}_1$ and $\mathcal{S}_2$ as defined in Eqs.\ (\ref{S1Def}) and (\ref{S2Def}) respectively, we have
\begin{equation}\label{S1Characterization}
    \mathcal{S}_1 = \set{(\rho, \sigma) \in \mathcal{D}_{\textup{inv}}(\mathcal{H})^2 \mid \rho = \sigma}
\end{equation}
and
\begin{equation}\begin{aligned}\label{S2Characterization}
    \mathcal{S}_2 = \Big\{ &(\rho, \sigma) \in \mathcal{D}_{\textup{inv}}(\mathcal{H})^2 \Bigm| \rho = \sigma \lor \Big( \exists c \in (0,1), \\ &\spec(M) = \set{c, \frac{1}{c}} \land [M, \rho - \sigma] = 0 \Big) \Big\},
\end{aligned}\end{equation}
where
\begin{equation*}
    M \coloneqq \rho^{-1} \# \sigma = \M
\end{equation*}
is the operator geometric mean between $\rho^{-1}$ and $\sigma$, and $\spec(M)$ denotes the spectrum of $M$.

\end{theorem}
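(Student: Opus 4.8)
The plan is to reverse-engineer Fuchs and van de Graaf's proof of Eq.~(\ref{QuantumFvdGInequalities}) and propagate equality conditions through it, following the general strategy described above. Writing $p_\Lambda \coloneqq \tr{\Lambda\rho}$ and $q_\Lambda \coloneqq \tr{\Lambda\sigma}$, that proof combines the variational formula Eq.~(\ref{TraceDistanceOptimization}), the classical bounds Eq.~(\ref{ClassicalFvdGInequalities}), and the variational formula Eq.~(\ref{FidelityOptimization}): for the left inequality, $1 - \F{\rho}{\sigma} = \max_{\Lambda \in \mathcal{M}}\parenth{1 - \Fc{p_\Lambda}{q_\Lambda}} \leq \max_\Lambda \Tc{p_\Lambda}{q_\Lambda} = \T{\rho}{\sigma}$, while for the right inequality, for any $\Lambda$ attaining the trace-distance maximum, $\T{\rho}{\sigma} = \Tc{p_\Lambda}{q_\Lambda} \leq \sqrt{1 - \Fc{p_\Lambda}{q_\Lambda}^2} \leq \sqrt{1 - \F{\rho}{\sigma}^2}$. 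Since the outer optima are attained, chasing equality through these chains shows that $(\rho,\sigma) \in \mathcal{S}_1$ iff $\mathcal{F}(\rho,\sigma) \subseteq \mathcal{C}_1(\rho,\sigma) \cap \mathcal{T}(\rho,\sigma)$, and $(\rho,\sigma) \in \mathcal{S}_2$ iff $\mathcal{T}(\rho,\sigma) \subseteq \mathcal{C}_2(\rho,\sigma) \cap \mathcal{F}(\rho,\sigma)$ (equivalently, iff a single $\Lambda \in \mathcal{M}$ lies in all three of the relevant sets). Lemmas~\ref{VariationalTraceDistanceLemma}--\ref{ClassicalFvdGInequalitiesLemma} describe these sets explicitly, so the problem reduces to intersecting them.

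The computational lever is the identity $M\rho M = \sigma$, valid for invertible $\rho,\sigma$: by Eq.~(\ref{eq:geomean}), $M = \rho^{-1}\#\sigma$ is the unique positive solution of $X\rho X = \sigma$ (or one checks it directly from $M = \power{\rho}{-}\K\,\power{\rho}{-}$). Hence if $\set{\ket{e_x}}$ is an eigenbasis of $M$ with eigenvalues $\mu_x$, then $\braket{e_x}{\sigma\,e_x} = \mu_x^2\,\braket{e_x}{\rho\,e_x}$ for all $x$; and since $\rho,\sigma$ are positive definite, every $\braket{e_x}{\rho\,e_x}$ and $\braket{e_x}{\sigma\,e_x}$ is strictly positive, which deletes the degenerate disjuncts ($\braket{e_x}{\rho\,e_x}=0$, $\braket{e_x}{\sigma\,e_x}=0$, $\tr{\Lambda\rho}\cdot\tr{\Lambda\sigma}=0$) in $\mathcal{C}_1,\mathcal{C}_2$ from Lemma~\ref{ClassicalFvdGInequalitiesLemma}. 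A further structural fact does the heavy lifting for $\mathcal{S}_2$: if $\spec(M)\subseteq\set{c,1/c}$ and $E$ is the spectral projector of $M$ onto the eigenvalue $c$, then $M = cE + c^{-1}E^\perp$, and substituting into $\sigma = M\rho M$ the cross terms cancel, leaving $\rho - \sigma = (1-c^2)\,E\rho E + (1-c^{-2})\,E^\perp\rho E^\perp$. For $c\in(0,1)$ the two summands are positive operators with orthogonal ranges, so by uniqueness of the Jordan decomposition the positive and negative parts of $\rho-\sigma$ are $P = (1-c^2)E\rho E$ and $Q = (c^{-2}-1)E^\perp\rho E^\perp$; moreover $\rho-\sigma$ visibly commutes with $E$, hence with $M$.

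With these in hand, $\mathcal{S}_1$ is quick: take $\Lambda \in \mathcal{F}(\rho,\sigma)$, which by Lemma~\ref{VariationalFidelityLemma} is an eigenbasis of $M$; if $(\rho,\sigma)\in\mathcal{S}_1$ then $\Lambda\in\mathcal{C}_1(\rho,\sigma)$, which after the simplification above forces $\braket{e_x}{\rho\,e_x}=\braket{e_x}{\sigma\,e_x}$, i.e.\ $\mu_x^2=1$, for all $x$, so $M=\mathbbm{1}$ and $\sigma=M\rho M=\rho$; the reverse inclusion is trivial, giving Eq.~(\ref{S1Characterization}). For $\mathcal{S}_2$ the case $\rho=\sigma$ is immediate, so take $\rho\ne\sigma$, whence $\T{\rho}{\sigma}>0$ and $p_\Lambda\ne q_\Lambda$ for every $\Lambda\in\mathcal{T}(\rho,\sigma)$. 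If $(\rho,\sigma)\in\mathcal{S}_2$, pick $\Lambda\in\mathcal{T}(\rho,\sigma)$; then $\Lambda\in\mathcal{F}(\rho,\sigma)$ is an $M$-eigenbasis and $\Lambda\in\mathcal{C}_2(\rho,\sigma)$, where $p_\Lambda\ne q_\Lambda$ together with invertibility leaves only the last disjunct of Lemma~\ref{ClassicalFvdGInequalitiesLemma}: $\mu_x^2 = \braket{e_x}{\sigma\,e_x}/\braket{e_x}{\rho\,e_x}\in\set{b,1/b}$ for some $b\in(0,1)$ and all $x$. Thus $\spec(M)\subseteq\set{c,1/c}$ with $c\coloneqq\sqrt{b}\in(0,1)$; $M$ cannot be a scalar (else $\sigma=\mu^2\rho$ and taking traces forces $\mu^2=1$, so $\sigma=\rho$), hence $\spec(M)=\set{c,1/c}$, and $[M,\rho-\sigma]=0$ by the previous paragraph. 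Conversely, if $\spec(M)=\set{c,1/c}$ with $c\in(0,1)$, the Jordan decomposition identified above shows that any $M$-eigenbasis $\set{\ket{e_x}}$ satisfies $P\ket{e_x}=0$ when $\ket{e_x}\in\im E^\perp$ and $Q\ket{e_x}=0$ when $\ket{e_x}\in\im E$; by Lemma~\ref{VariationalTraceDistanceLemma} such a $\Lambda$ lies in $\mathcal{T}(\rho,\sigma)$, by Lemma~\ref{VariationalFidelityLemma} in $\mathcal{F}(\rho,\sigma)$, and (taking $b=c^2$) in $\mathcal{C}_2(\rho,\sigma)$, so $(\rho,\sigma)\in\mathcal{S}_2$ by the equality-chasing criterion, which yields Eq.~(\ref{S2Characterization}).

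The main obstacle is the bookkeeping rather than any single computation: one must make the reduction ``$\G_1=\G_n$ iff a common $\Lambda$ saturates every intermediate step'' fully rigorous (in particular confirming the outer optima in Eqs.~(\ref{TraceDistanceOptimization}) and~(\ref{FidelityOptimization}) are attained, so that a single optimal $\Lambda$ genuinely exists), and then verify that the three lemmas' characterizations are mutually compatible --- i.e.\ that the $M$-eigenbasis provided by Lemma~\ref{VariationalFidelityLemma} can simultaneously be made to meet the kernel condition of Lemma~\ref{VariationalTraceDistanceLemma}, which is exactly where the explicit form of $\rho-\sigma$ in terms of the spectral projectors of $M$ is needed. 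A secondary delicate point is checking that the two eigenvalues of $M$ are genuinely distinct reciprocals (ruling out $M=\mathbbm{1}$, or a non-trivial scalar, when $\rho\ne\sigma$) and that all the degenerate disjuncts really do vanish under invertibility --- which also foreshadows why the noninvertible case discussed next is substantially harder.
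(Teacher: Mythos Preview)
Your proof is correct and follows essentially the same route as the paper: chase equality through the two Fuchs--van de Graaf chains using Lemmas~\ref{VariationalTraceDistanceLemma}--\ref{ClassicalFvdGInequalitiesLemma} together with the key identity $\sigma = M\rho M$. The one notable variation is your explicit block computation $\rho-\sigma = (1-c^2)E\rho E + (1-c^{-2})E^\perp\rho E^\perp$; you use it both to read off the Jordan parts (so that \emph{any} $M$-eigenbasis already lies in $\mathcal{T}(\rho,\sigma)$ via Lemma~\ref{VariationalTraceDistanceLemma}) and to infer $[M,\rho-\sigma]=0$ directly from $\spec(M)\subseteq\{c,1/c\}$ --- incidentally showing that the commutation clause in Eq.~(\ref{S2Characterization}) is redundant --- whereas the paper instead starts the $\mathcal{S}_2$ argument from an eigenbasis of $\rho-\sigma$ and, in the converse, invokes the commutation hypothesis to produce a simultaneous eigenbasis.
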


\begin{proof}
We first show one way to prove the left-hand inequality in Eq.~(\ref{QuantumFvdGInequalities}) for arbitrary $\rho, \sigma \in \mathcal{D}(\mathcal{H})$. Observe that for any $\Lambda_T \in \mathcal{T}(\rho, \sigma)$ and $\Lambda_F \in \mathcal{F}(\rho, \sigma)$, we have
\begin{equation}\begin{aligned}\label{QuantumFvdGInequality1}
    1 - \F{\rho}{\sigma} 
    &= 1 - \Fc{\tr{\Lambda_F \rho}}{\tr{\Lambda_F \sigma}} \\
    &\leq \Tc{\tr{\Lambda_F \rho}}{\tr{\Lambda_F \sigma}} \\
    &\leq \Tc{\tr{\Lambda_T \rho}}{\tr{\Lambda_T \sigma}} \\
    &= \T{\rho}{\sigma},
\end{aligned}\end{equation}
where the second line follows from Eq.~(\ref{ClassicalFvdGInequalities}) and the third line follows from Eq.~(\ref{TraceDistanceOptimization}). 

With this, we can now prove Eq.~(\ref{S1Characterization}). Restrict attention to invertible $\rho, \sigma \in \mathcal{D}_{\text{inv}}(\mathcal{H})$. First note that if $\rho = \sigma$, then clearly $1 - \F{\rho}{\sigma} = \T{\rho}{\sigma}$. Conversely, suppose that $1 - \F{\rho}{\sigma} = \T{\rho}{\sigma}$. Let $\set{\ket{e_x}}_{\xinX}$ be an orthonormal eigenbasis for the operator geometric mean $M$, and denote the corresponding projective measurement by $\Lambda_F$, in which case by Lemma \ref{VariationalFidelityLemma} we have
\begin{equation}\label{LambdaF}
    \Lambda_F \coloneqq \set{\ketbra{e_x}{e_x}}_{\xinX} \in \mathcal{F}(\rho, \sigma).
\end{equation}
Since $\Lambda_F \in \mathcal{F}(\rho, \sigma)$, the inequalities in the bound~(\ref{QuantumFvdGInequality1}) hold (given some arbitrary choice of $\Lambda_T \in \mathcal{T}(\rho, \sigma)$).
But since we assumed that $1 - \F{\rho}{\sigma} = \T{\rho}{\sigma}$, both inequalities in the bound must in fact be equalities. Focusing on the first inequality, we see that $\Lambda_F \in \mathcal{C}_1(\rho, \sigma)$. Then by Lemma \ref{ClassicalFvdGInequalitiesLemma}, we have
\begin{equation*}\begin{aligned}
    \forall \xinX, &\braket{e_x}{\rho \, e_x} = \braket{e_x}{\sigma \, e_x} \, \lor \\ &\braket{e_x}{\rho \, e_x} = 0 \lor \braket{e_x}{\sigma \, e_x} = 0.
\end{aligned}\end{equation*}
Since $\rho$ and $\sigma$ are assumed to be invertible and all $\ket{e_x}$ are nonzero, this implies
\begin{equation}\begin{aligned}\label{Star1}
    \forall \xinX, \braket{e_x}{\rho \, e_x} = \braket{e_x}{\sigma \, e_x}.
\end{aligned}\end{equation}
Now, for each $\xinX$, let $c_x > 0$ be the eigenvalue of $M$ corresponding to $\ket{e_x}$. Since $\sigma = M \rho M$, Eq.~(\ref{Star1}) implies
\begin{equation*}\begin{aligned}
    \forall \xinX, \braket{e_x}{\rho \, e_x} = c_x^2 \braket{e_x}{\rho \, e_x}.
\end{aligned}\end{equation*}
Since $\rho > 0$, this implies
\begin{equation*}\begin{aligned}
    \forall \xinX, c_x = 1.
\end{aligned}\end{equation*}
Thus, $M = \mathbbm{1}$, which implies that $\rho = \sigma$, as needed.

Next, we show one way to prove the right-hand inequality in Eq.~(\ref{QuantumFvdGInequalities}) for arbitrary $\rho, \sigma \in \mathcal{D}(\mathcal{H})$. Observe that for any $\Lambda_T \in \mathcal{T}(\rho, \sigma)$ and $\Lambda_F \in \mathcal{F}(\rho, \sigma)$, we have
\begin{equation}\begin{aligned}\label{QuantumFvdGInequality2}
    \T{\rho}{\sigma}
    &= \Tc{\tr{\Lambda_T \rho}}{\tr{\Lambda_T \sigma}} \\
    &\leq \sqrt{1-\Fc{\tr{\Lambda_T \rho}}{\tr{\Lambda_T \sigma}}^2} \\
    &\leq \sqrt{1-\Fc{\tr{\Lambda_F \rho}}{\tr{\Lambda_F \sigma}}^2} \\
    &=\sqrt{1-\F{\rho}{\sigma}^2},
\end{aligned}\end{equation}
where the second line follows from Eq.~(\ref{ClassicalFvdGInequalities}) and the third line follows from Eq.~(\ref{FidelityOptimization}). 

With this, we now prove Eq.~(\ref{S2Characterization}). Restrict attention to invertible $\rho, \sigma \in \mathcal{D}_{\text{inv}}(\mathcal{H})$. First note that if $\rho = \sigma$, then clearly $\T{\rho}{\sigma} = \sqrt{1-\F{\rho}{\sigma}^2}$, so this case is trivial. Otherwise, suppose that
\begin{equation*}
    \exists c \in (0,1), \spec(M) = \set{c, \frac{1}{c}} \land [M, \rho - \sigma] = 0.
\end{equation*}
Then there exists a basis $\set{\ket{e_x}}_{\xinX}$ for $\mathcal{H}$ that simultaneously diagonalizes $M$ and $\rho - \sigma$. Moreover, there exists $c \in (0,1)$ such that for all $\xinX$, $M \ket{e_x} = c \, \ket{e_x}$ or $M \ket{e_x} = \frac{1}{c} \, \ket{e_x}$. Define $\Lambda \coloneqq \set{\ketbra{e_x}{e_x}}_{\xinX} \in \mathcal{M}$. By Lemma \ref{VariationalTraceDistanceLemma}, $\Lambda \in \mathcal{T}(\rho, \sigma)$, and by Lemma \ref{VariationalFidelityLemma}, $\Lambda \in \mathcal{F}(\rho, \sigma)$. Thus, the inequalities in the bound in~(\ref{QuantumFvdGInequality2}) hold with $\Lambda_T \coloneqq \Lambda \eqqcolon \Lambda_F$. Now, since $\Lambda_T = \Lambda_F$, the third line in Eq.~(\ref{QuantumFvdGInequality2}) becomes an equality. Moreover, for each $\xinX$, we have
\begin{equation*}\begin{aligned}
    \braket{e_x}{\sigma \, e_x} &= \braket{e_x}{M \rho M \, e_x} \\
    &=  \begin{cases}
            c^2 \braket{e_x}{\rho \, e_x} & M \ket{e_x} = c \ket{e_x} \\
            \frac{1}{c^2} \braket{e_x}{\rho \, e_x} & M \ket{e_x} = \frac{1}{c} \ket{e_x} \\
        \end{cases}.
\end{aligned}\end{equation*}
With $b \coloneqq c^2 \in (0,1)$, we see that by Lemma \ref{ClassicalFvdGInequalitiesLemma}, $\Lambda \in \mathcal{C}_2(\rho, \sigma)$. Thus, the second line in Eq.~(\ref{QuantumFvdGInequality2}) becomes an equality. Then we have
\begin{equation*}
    \T{\rho}{\sigma} = \sqrt{1-\F{\rho}{\sigma}^2},
\end{equation*}
as needed.

Conversely, suppose that $\T{\rho}{\sigma} = \sqrt{1-\F{\rho}{\sigma}^2}$. 
Let $\set{\ket{f_y}}_{\yinX}$ be an orthonormal eigenbasis for $\rho - \sigma$, and denote the corresponding projective measurement by $\Lambda_T$, 
in which case by Lemma \ref{VariationalTraceDistanceLemma} we have
\begin{equation}\label{LambdaT}
    \Lambda_T \coloneqq \set{\ketbra{f_y}{f_y}}_{\yinX} \in  \mathcal{T}(\rho, \sigma).
\end{equation}
Since $\Lambda_T \in \mathcal{T}(\rho, \sigma)$, the inequalities in the bound~(\ref{QuantumFvdGInequality2}) hold (given some arbitrary choice of $\Lambda_F \in \mathcal{F}(\rho, \sigma)$).
But since we assumed that $\T{\rho}{\sigma} = \sqrt{1-\F{\rho}{\sigma}^2}$, both inequalities in the bound must in fact be equalities. 
Thus we have $\Lambda_T \in \mathcal{C}_2(\rho, \sigma)$ and $\Lambda_T \in \mathcal{F}(\rho, \sigma)$ (i.e.~$\Lambda_T$ must also be a fidelity-preserving measurement). Then by Lemma \ref{VariationalFidelityLemma}, $\set{\ket{f_y}}_{\yinX}$ is an orthonormal basis for $M$, so $[M, \rho - \sigma] = 0$. Moreover, by Lemma \ref{ClassicalFvdGInequalitiesLemma}, we have
\begin{equation*}\begin{aligned}
    &\parenth{\forall \yinX, \braket{f_y}{\rho \, f_y} = \braket{f_y}{\sigma \, f_y}} \, \lor \\
    &\parenth{\forall \yinX, \braket{f_y}{\rho \, f_y} = 0 \lor \braket{f_y}{\sigma \, f_y} = 0} \, \lor \\
    &\parenth{\begin{aligned} &\exists b \in (0,1), \forall \yinX, \\ &\braket{f_y}{\sigma \, f_y} = b \braket{f_y}{\rho \, f_y} \lor \braket{f_y}{\sigma \, f_y} = \frac{1}{b} \braket{f_y}{\rho \, f_y} \end{aligned}}.
\end{aligned}\end{equation*}
Now, if the first line above holds, then we appeal to a previous argument (see Eq.~(\ref{Star1})) and conclude that $\rho = \sigma$. If the second line above holds, then we obtain a contradiction, since $\rho$ and $\sigma$ are assumed to be invertible and all $\ket{f_y}$ are nonzero. Now, suppose that the third line above holds. For each $\yinX$, let $c_y > 0$ be the eigenvalue of $M$ corresponding to $\ket{f_y}$. With $\sigma = M \rho M$, we then have
\begin{equation*}\begin{aligned}
    &\exists b \in (0,1) , \forall \yinX, \\
    &c_y^2 \braket{f_y}{\rho \, f_y} = b \braket{f_y}{\rho \, f_y} \lor c_y^2 \braket{f_y}{\rho \, f_y} = \frac{1}{b} \braket{f_y}{\rho \, f_y}.
\end{aligned}\end{equation*}
With $c \coloneqq \sqrt{b} \in (0,1)$ and $\rho > 0$, this implies
\begin{equation*}\begin{aligned}
    \exists c \in (0,1), \forall \yinX, c_y = c \lor c_y = \frac{1}{c}.
\end{aligned}\end{equation*}
Thus, every eigenvalue $c_y$ of $M$ is either $c$ or $\frac{1}{c}$, so $\spec(M) = \set{c,\frac{1}{c}}$, as needed.
\end{proof}

\subsection{Noninvertible states}

How can we generalize Theorem \ref{QuantumFvdGTheorem} to arbitrary density operators $\rho, \sigma \in \mathcal{D}(\mathcal{H})$? To begin addressing this question, note that Lemmas \ref{VariationalTraceDistanceLemma} and \ref{ClassicalFvdGInequalitiesLemma} already apply to arbitrary $\rho, \sigma \in \mathcal{D}(\mathcal{H})$, but Lemma \ref{VariationalFidelityLemma} applies only to invertible $\rho, \sigma$. Next, note that Theorem \ref{QuantumFvdGTheorem} appears to generalize readily to noninvertible $\rho$ and $\sigma$ given a characterization of $\mathcal{F}(\rho, \sigma)$ for noninvertible $\rho$ and $\sigma$. Thus, it appears that the main difficulty in generalizing Theorem \ref{QuantumFvdGTheorem} is with generalizing Lemma \ref{VariationalFidelityLemma} to arbitrary $\rho$ and $\sigma$. In other words, if we can characterize the rank-1 projective measurements $\Lambda \in \mathcal{M}$ which achieve the optimum in Eq.~(\ref{FidelityOptimization}) for arbitrary $\rho$ and $\sigma$, i.e.\ if we can characterize $\mathcal{F}(\rho, \sigma)$ for arbitrary $\rho$ and $\sigma$, then it seems relatively straightforward to extend our characterization of the sets $\mathcal{S}_1,\mathcal{S}_2 \subseteq \mathcal{D}_{\text{inv}}(\mathcal{H})$ to a characterization of similarly defined sets $\tilde{\mathcal{S}}_1,\tilde{\mathcal{S}}_2 \subseteq \mathcal{D}(\mathcal{H})$ for arbitrary $\rho$ and $\sigma$.

Thus, we discuss possibilities for characterizing the set $\mathcal{F}(\rho, \sigma)$ for arbitrary $\rho$ and $\sigma$. We highlight that the proofs of the variational characterization~\eqref{FidelityOptimization} in e.g.~\cite{FC95,MW13} only yield specific examples of measurements attaining the optimum, rather than characterizing the set of all such measurements, which (as we shall soon discuss) appears significantly larger for some noninvertible states. As for the proof in~\cite{NC10}, it implicitly uses a compactness argument that also does not seem to yield a precise characterization of this set.

At first, it seems plausible that some appropriate generalization of the operator $M \coloneqq \rho^{-1} \# \sigma$ may be involved in the characterization of $\mathcal{F}(\rho, \sigma)$. However, it is not immediately clear what this generalization of $M$ might be. For example, one possibility for generalizing $M$ is to consider pseudoinverses. If $\rho \in \mathcal{D}(\mathcal{H})$ is noninvertible, the (Moore-Penrose) pseudoinverse is informally the ``inverse on the support" of $\rho$. That is (using $\rho^{-1}$ to denote the pseudoinverse of $\rho$), $\rho^{-1} \in \mathcal{L}(\mathcal{H})$ is an operator with the property that $\rho \, \rho^{-1} = \rho^{-1} \rho = \Pi_\rho$, where $\Pi_\rho$ is the projector onto the support of $\rho$. We could then try to define $M \coloneqq \rho^{-1} \# \sigma$ using pseudoinverses. However, with this approach, it is not possible to say that the set $\mathcal{F}(\rho, \sigma)$ for arbitrary $\rho, \sigma$ is just the same as in Lemma~\ref{VariationalFidelityLemma} except with $M$ defined via pseudoinverses. To see this, consider the case of pure states $\rho = \ketbra{\rho}{\rho}$ and $\sigma = \ketbra{\sigma}{\sigma}$. Then with this definition of $M$, we would have
\begin{equation*}
    \rho^{-1} \# \sigma = \abs{\braket{\rho}{\sigma}}\ketbra{\rho}{\rho},
\end{equation*}
but
\begin{equation*}
    \sigma^{-1} \# \rho = \abs{\braket{\rho}{\sigma}}\ketbra{\sigma}{\sigma}.
\end{equation*}
Now, observe that $\mathcal{F}$ is symmetric in its arguments, i.e.\ $\mathcal{F}(\rho, \sigma) = \mathcal{F}(\sigma, \rho)$, since the quantum and classical fidelities are both symmetric in their arguments. Thus, if Lemma~\ref{VariationalFidelityLemma} held for this choice of definition for $M$, then every $\Lambda \in \mathcal{F}(\rho, \sigma) = \mathcal{F}(\sigma, \rho)$ must be an eigenbasis of both $\rho^{-1} \# \sigma$ and $\sigma^{-1} \# \rho$. But when $\rho$ and $\sigma$ are pure states as above, this implies that $\ketbra{\rho}{\rho} \in \Lambda$ and $\ketbra{\sigma}{\sigma} \in \Lambda$, which is impossible whenever $\rho$ and $\sigma$ are distinct and nonorthogonal. This would imply that $\mathcal{F}(\rho, \sigma)$ is empty whenever $\rho$ and $\sigma$ are distinct, nonorthogonal pure states, which completely contradicts Eq.~(\ref{FidelityOptimization}). (Essentially, the fundamental issue here is that defining $M$ using pseudoinverses causes it to lose a symmetry property $\rho^{-1} \# \sigma = (\sigma^{-1} \# \rho)^{-1}$ that held for invertible operators.)

Another potential approach for generalizing $M$ is to note that for noninvertible operators, one can choose to define the operator geometric mean as (see e.g.~\cite{Bha06} page 211)
\begin{equation}\label{eq:geomean2}
    A \# B \coloneqq \lim_{\delta \to 0^+} (A+\delta \mathbbm{1}) \# (B+\delta \mathbbm{1}),
\end{equation}
where the right-hand-side can be computed using the definition~\eqref{eq:geomean} since $A+\delta \mathbbm{1}$ and $B+\delta \mathbbm{1}$ are both invertible for $\delta>0$. It can be shown that the limit in~\eqref{eq:geomean2} indeed exists for all positive semidefinite $A$ and $B$, although this definition of $A \# B$ has the drawback that it is not continuous with respect to $A$ and $B$~\cite{Bha06}.
However, this approach in our context faces the difficulty that the term in our result is not $\rho \# \sigma$, but rather $\rho^{-1} \# \sigma$. Therefore, even if we were to choose some generalized definition of the operator geometric mean for noninvertible operators, it would still not be enough by itself to resolve the issue of generalizing $\rho^{-1} \# \sigma$, since $\rho^{-1}$ is already ill-defined if $\rho$ is noninvertible.

Drawing on the above idea, however, we could still consider ``$\delta$-perturbed'' versions of $\rho$ and $\sigma$ (such that the perturbed versions are invertible), and analyze the $\delta\to 0^+$ limit in the broader context of our desired result rather than the operator geometric mean specifically. 
We sketch the starting points of this approach here, deferring further analysis to Appendix~\ref{app:perturb}.
To begin, consider the following states, where $d \coloneqq \dim \mathcal{H}$ (here we shall define the perturbations slightly differently from~\eqref{eq:geomean2}, in order to ensure that $\rho_\delta,\sigma_\delta$ are normalized states):
\begin{equation}\label{eq:perturb}
    \rho_\delta \coloneqq (1-\delta) \rho + \delta \frac{\mathbbm{1}}{d}, \quad
    \sigma_\delta \coloneqq (1-\delta) \sigma + \delta \frac{\mathbbm{1}}{d}.
\end{equation}
Using the (reverse) triangle inequality for angular distance, these states can be seen to satisfy
\begin{gather*}
    \left| \A{\rho_\delta}{\sigma_\delta} - \A{\rho}{\sigma}\right| \leq \negl(\delta),
\end{gather*}
where for brevity we use the notation $\negl(\delta)$ to indicate any expression such that $\lim_{\delta\to 0^+}\negl(\delta) = 0$. In other words, the perturbations as defined in~\eqref{eq:perturb} only change the angular distance (and thus also the fidelity) by an amount that vanishes in the $\delta\to 0^+$ limit. 

From this, we see that if for instance $\rho$ and $\sigma$ saturate the upper Fuchs--van de Graaf inequality, then $\rho_\delta$ and $\sigma_\delta$ ``approximately saturate'' it as well, in the sense that
\begin{equation}\label{eq:appFvdG}
\begin{gathered}
\T{\rho}{\sigma} - \sqrt{1 - \F{\rho}{\sigma}^2} = 0\\
\implies \left| \T{\rho}{\sigma} - \sqrt{1 - \F{\rho_\delta}{\sigma_\delta}^2} \right| \leq \negl(\delta).
\end{gathered}
\end{equation}
(In the above, we have only considered perturbing the fidelity term rather than the trace-distance term; the reason for this will become apparent in our more detailed analysis in Appendix~\ref{app:perturb}.)
Following this form of reasoning, we could continue onwards and attempt to repeat the proof of Theorem~\ref{QuantumFvdGTheorem}, except with ``approximate equalities'' instead of equalities. We were able to make some progress with this approach, which we describe in Appendix~\ref{app:perturb}. However, it still does not seem sufficient to resolve the question of extending Theorem~\ref{QuantumFvdGTheorem} to arbitrary noninvertible states, and perhaps raises the question of whether some notion of the geometric mean operator is even the ``right'' object to consider in this characterization.
We also remark that the above considerations seem to suggest the main challenges for noninvertible states mostly arise when $\rho$ is noninvertible --- at a high level, it seems that the proofs in the preceding sections should basically carry through for noninvertible $\sigma$ as long as $\rho$ is still invertible.

To end off, we highlight what seems to be a significant broad obstacle in generalizing our characterization of $\mathcal{F}(\rho, \sigma)$ to arbitrary $\rho$ and $\sigma$, by presenting the characterization of this set for the case of pure states $\rho = \ketbra{\rho}{\rho}$ and $\sigma = \ketbra{\sigma}{\sigma}$. While it is possible to work through the proof of Lemma \ref{VariationalFidelityLemma} to study this special case, an easier approach is via direct computation. For any $\Lambda \coloneqq \set{\ketbra{e_x}{e_x}}_{\xinX} \in \mathcal{M}$, we have
\begin{equation*}\begin{aligned}
    \F{\rho}{\sigma} &= \abs{\braket{\rho}{\sigma}} \\
    &= \abs{\sum_{\xinX} \braket{\rho}{e_x} \braket{e_x}{\sigma}} \\
    &\leq \sum_{\xinX} \abs{\braket{\rho}{e_x} \braket{e_x}{\sigma}} \\
    &= \Fc{\tr{\Lambda \rho}}{\tr{\Lambda \sigma}}
\end{aligned}\end{equation*}
Now, equality holds in the above iff the triangle inequality in the third line is a strict equality. This occurs iff there exists $\theta \in \mathbb{R}$ such that for every $\xinX$, $\arg \braket{\rho}{e_x} \braket{e_x}{\sigma} \equiv \theta \!\! \mod 2\pi \lor \braket{\rho}{e_x} \braket{e_x}{\sigma} = 0$. Thus, for pure $\rho$ and $\sigma$, we have
\begin{equation*}\begin{aligned}
    \mathcal{F}(\ketbra{\rho}{\rho},\ketbra{\sigma}{\sigma}) = \Big\{&\Lambda \in \mathcal{M} \mid \exists \theta \in \mathbb{R}, \forall \xinX, \\
    &\arg\braket{e_x}{\sigma} - \arg\braket{e_x}{\rho} \equiv \theta \!\!\!\! \mod 2\pi \, \lor \\ &\braket{e_x}{\sigma} = 0 \lor \braket{e_x}{\rho} = 0 \Big\}.
\end{aligned}\end{equation*}
Note that the global phases of the representative state vectors $\ket{\rho},\ket{\sigma}$ can be chosen arbitrarily, as the description of the above set is invariant under such changes of phase.

With this, we see that $\mathcal{F}(\rho, \sigma)$ is a much larger set in the pure-state case than in the invertible case: in the latter case, all $\Lambda \in \mathcal{F}(\rho, \sigma)$ are essentially equivalent up to degeneracy in the spectral decomposition of $M$, whereas in the former case, we have for instance that \emph{any} orthonormal basis in which all the components $\braket{e_x}{\rho},\braket{e_x}{\sigma}$ are real and non-negative yields a measurement in $\mathcal{F}(\rho, \sigma)$ (note that it is easy to construct examples of pure states $\ket{\rho}$ and $\ket{\sigma}$ such that many orthonormal bases do have this property). Thus, it seems unclear how to generalize our characterization of $\mathcal{F}(\rho, \sigma)$ to noninvertible states, since any such generalization must capture the special case above for pure $\rho$ and $\sigma$.

\section{Conclusion}\label{sec:conclusion}

In this work, we derived a continuity bound for the conditional entropy of quantum-classical states with respect to angular distance. This bound satisfies both of conditions \ref{DesiredBoundConditionIndependence} and \ref{DesiredBoundConditionDifferentiable}, which are desirable in many applications to quantum key distribution \cite{SBV+21, Upadhyaya_2021}. However, in those applications, a continuity bound for classical-quantum states is usually required. Further work is thus needed to extend our result to classical-quantum states. Numerical evidence suggests that our bound in Eq.~(\ref{ContinuityBound}) may indeed hold for such states, and with minor modifications, our bound may also be valid for fully quantum states.

To find such a generalization, one approach could be to consider the large body of work on entropic continuity bounds in terms of trace distance, for instance~\cite{Win16,arx_HD19,arx_JD20,arx_MD22,arx_BCG+22}, and study whether any of the proof approaches in those works could be modified to use angular distance instead.
To begin, the works~\cite{arx_HD19,arx_JD20} used techniques from majorization theory to prove entropic continuity bounds, including for families of R\'{e}nyi entropies. However, those techniques do not seem straightforward to apply when the conditioning system is quantum. For such scenarios, continuity bounds were derived in~\cite{arx_MD22} for R\'{e}nyi entropies, and a recent work~\cite{arx_BCG+22} proved an ``almost locally affine'' property of the relative entropy that (amongst other results) reproduces the bound~\cite{Win16}. However, qualitatively speaking, the approaches in those works seem to rely on studying ``additive perturbations'' to the states $\rho,\sigma$, which are naturally related to trace distance but seem more difficult to express in terms of angular distance. 

We also note the observation in~\cite{Wil20} that continuity bounds for fully classical conditional entropies can often be quite ``generically'' extended to the quantum-classical case (essentially, whenever the distance measure satisfies a data-processing inequality), but such a generic extension for the classical-quantum or fully quantum cases seems to require new ideas or techniques. If this is indeed so, extending our result to cover those latter cases would require exploiting some specific property of angular distance that is not shared by trace distance, for instance the characterization via Uhlmann's theorem.

In the second part of our work, to relate previous continuity bounds based on trace distance to our result based on angular distance, we explored the relationship between trace distance and angular distance via the Fuchs--van de Graaf inequalities. In particular, we derived necessary and sufficient conditions for invertible states to saturate either side of the Fuchs--van de Graaf inequalities relating the trace distance and fidelity. We remark that this may have independent applications in other topics, such as computing keyrates for QKD; we briefly outline this in appendix~\ref{app:keyrates}.

In addition, we showed that generalizing our result to noninvertible states appears nontrivial, and that this generalization is closely related to characterizing the set of rank-1 projective measurements which preserve the fidelity. Future work could continue by generalizing Lemma \ref{VariationalFidelityLemma} and Theorem \ref{QuantumFvdGTheorem} to noninvertible states; it appears that generalizing Lemma \ref{VariationalFidelityLemma} is the difficult part, whereas generalizing Theorem \ref{QuantumFvdGTheorem} after that seems to be relatively straightforward.

We note that an alternative approach for such an analysis could be to utilize a proof of the upper Fuchs--van de Graaf inequality via Uhlmann's theorem, as presented in e.g.~\cite{NC10}. To give a high-level overview, that approach would yield (via the discussion at the beginning of Sec.\ \ref{sec:InvertibleStates}) that the set of states saturating the upper Fuchs--van de Graaf inequality is precisely the set of states such that the Uhlmann purifications have the same trace distance as the original states. However, the construction of the Uhlmann purification involves a polar decomposition that appears similar to the one which arises in the proof of Lemma \ref{VariationalFidelityLemma}, and it seems unclear whether it can be analyzed more fruitfully.

\begin{acknowledgments}

We thank Norbert L\"{u}tkenhaus, Joseph M.~Renes and Jinzhao Wang
for helpful feedback and discussions. 
Financial support for this work has been provided by the Natural Sciences and Engineering Research Council of Canada (NSERC) Alliance, and Huawei Technologies Canada Co., Ltd.
Numerical data was generated using Mathematica\textsuperscript{\textregistered} 12.1.

\end{acknowledgments}

\appendix

\section{Conversions between continuity bounds}
\label{app:compare}

Using the Fuchs--van de Graaf inequalities in Eq.\ (\ref{QuantumFvdGInequalities}), we can convert continuity bounds in trace distance to continuity bounds in angular distance and vice versa. For states $\rho, \sigma \in \mathcal{D}(\mathcal{H})$ with trace distance $T \coloneqq \T{\rho}{\sigma}$ and angular distance $A \coloneqq \A{\rho}{\sigma}$, the upper Fuchs--van de Graaf inequality gives $T \leq \sin A$. However, as illustrated in the calculations at the beginning of Sec.~\ref{sec:FvdG}, combining this with the previously known bounds (\ref{AudenaertBound}) and (\ref{WinterBound}) in trace distance yields results which scale badly with $A$ as compared to our result.

In the reverse direction, the lower Fuchs--van de Graaf inequality gives $A \leq \arccos(1-T)$. Plugging this into our continuity bound (\ref{ContinuityBound}) yields an upper bound of $u(d_A) \arccos(1-T)$ on the difference in conditional entropies. At small $T$, we can approximate this as
\begin{equation*}
u(d_A) \arccos(1-T) \approx u(d_A) \sqrt{2T}.
\end{equation*}

For comparison, in the bound~\eqref{AudenaertBound} for the unconditioned entropies (which corresponds to the $d_B=1$ case in our bound), we can use an approximation $\hbinary{x} \lesssim 2\sqrt{x}$ at small $x$ for the binary entropy function to obtain
\begin{align*}
T \ln\parenth{d_A-1} + \hbinary{T} 
&\lesssim T \ln\parenth{d_A-1} + 2\sqrt{T} \\
&< (\ln\parenth{d_A-1} + 2)\sqrt{T}.
\end{align*}
From the first line, we see that at small $T$, both the bound obtained via our result and the bound from~\eqref{AudenaertBound} basically scale on the order of $\sqrt{T}$; however, the coefficient in front of that term is quite different. The second line above reveals that our bound cannot outperform the bound~\eqref{AudenaertBound} (as should be expected, since the latter is tight): for $d_A \leq 3$ one can verify numerically that $\ln\parenth{d_A-1} + 2 \leq u(d_A) $, while for $d_A \geq 4$ we have $\ln\parenth{d_A-1} + 2 \leq \ln\parenth{d_A-1} + \ln\parenth{d_A} \leq 2\ln\parenth{d_A} \leq u(d_A) $. 

As for the bound~\eqref{WinterBound} for conditional entropies (which is nearly tight but not exactly so~\cite{Win16}), we first highlight that for quantum-classical states, the prefactor of $2$ on the first term in that bound can be omitted. Since our bound only holds for quantum-classical states, we can compare it to that version. However, with that change, the modified version of \eqref{WinterBound} is quite close to the unconditioned-entropy bound~\eqref{AudenaertBound} at small $T$, and hence (for small $T$, at least) our result also cannot provide an improvement over~\eqref{WinterBound} via a conversion of this form. 

\section{Challenges in purification-based arguments}
\label{app:purification}

At first sight, it might appear that since we are considering angular distance as our metric, it would be useful to consider purifications, since we could for instance apply results such as Uhlmann's theorem. Unfortunately, this appears to encounter difficulties regarding the conditional entropies, as we shall now describe.

We first observe that for fully quantum states, the following issue arises. Take any $\rho, \sigma \in \mathcal{D}\parenth{\mathcal{H}^A \otimes \mathcal{H}^B}$ such that $\rho_A = \sigma_A$. If we purify $\rho$ and $\sigma$ to some pure states $\rho_{ABR}$ and $\sigma_{ABR}$, then we have $\operatorname{H} \parenth{A|BR}_{\rho} = -\operatorname{H} \parenth{A}_{\rho} = -\operatorname{H} \parenth{A}_{\sigma} = \operatorname{H} \parenth{A|BR}_{\sigma}$, i.e.\ the difference in conditional entropies $\operatorname{H}\parenth{A|BR}$ is zero. Hence these conditional entropies cannot give us any information about the difference in the original conditional entropies $\operatorname{H} \parenth{A|B}$. If we instead consider the entropies $\operatorname{H}\parenth{AR|B}$, an analogous problem arises whenever $\rho_B = \sigma_B$.

For classical-quantum states, we could consider a modified version of this approach, by taking ``individual purifications'' of the conditional quantum states on the $\mathcal{H}^B$ systems. Specifically, for a state $\rho_{AB}$ of the form $\rho_{AB} = \sum_a p_a \ketbra{a}{a} \otimes \rho^{(a)}_{B}$, we can consider an extension $\rho_{ABR} = \sum_a  p_a \ketbra{a}{a} \otimes \rho^{(a)}_{BR}$ such that each $\rho^{(a)}_{BR}$ is a purification of $\rho^{(a)}_{B}$. For simplicity, let us suppose here that the other state $\sigma_{AB}$ satisfies $\sigma_A=\rho_A$ (and is also classical-quantum).
In that case, by an appropriate application of Uhlmann's theorem, it can be shown that for any extension of $\rho_{AB}$ as described above, we can construct an analogous extension of $\sigma_{AB}$ with the property $\F{\rho_{AB}}{\sigma_{AB}} = \F{\rho_{ABR}}{\sigma_{ABR}}$. This seems promising as it preserves the angular distance --- if we could furthermore show that the difference in conditional entropies is nondecreasing under some such extension, this would imply that it suffices to consider pure conditional states in this scenario, simplifying the analysis. (In fact, if that were true, it would already be sufficient to yield a continuity bound with the desired properties by noting that the pure conditional states would all be supported on a subspace of dimension at most $2d_A$, although the resulting Lipschitz constant might be suboptimal.)

However, this encounters the following obstacle: 
one can show~\footnote{To outline the key ideas, in~\cite{RFZ10} the following bound was derived (see Eq.~(21) of that work): $\HCond{\rho} \geq 1-\hbinary{\left(1-\F{\rho^{(0)}_{B}}{\rho^{(1)}_{B}}\right)/2}$ for $\rho_{AB}$ of the form $\rho_{AB} = \sum_{a\in\{0,1\}} (1/2) \ketbra{a}{a} \otimes \rho^{(a)}_{B}$, with equality holding when the two conditional states $\rho^{(a)}_{B}$ are pure.
Take any $\rho_{AB}$ such that this bound is a strict inequality, then observe that any purifications of the conditional states must satisfy $\F{\rho^{(0)}_{B}}{\rho^{(1)}_{B}} \geq \F{\rho^{(0)}_{BR}}{\rho^{(1)}_{BR}}$, and use the fact that the bound from~\cite{RFZ10} becomes an equality for pure $\rho^{(a)}_{BR}$.} that there exist states $\rho_{AB}$ such that for some $\delta>0$, \emph{any} extension $\rho_{ABR}$ in the above sense satisfies $ \operatorname{H} \parenth{A|BR}_{\rho} < \HCond{\rho}  - \delta$; i.e.\ the conditional entropy of the extension is bounded away from the original value by a constant.
In that case, if we take any classical-quantum $\sigma_{AB}$ with $\HCond{\sigma} = 0$ (in which case any extension in the above sense must have $\operatorname{H} \parenth{A|BR}_{\sigma} = 0$ as well, due to strong subadditivity), we have 
\begin{align*}
\Big| \! \operatorname{H} \parenth{A|BR}_{\rho} - \operatorname{H} \parenth{A|BR}_{\sigma} \! \Big| 
=& \operatorname{H} \parenth{A|BR}_{\rho} \\
<& \HCond{\rho} - \delta \\
=& \Big| \! \HCond{\rho} - \HCond{\sigma} \! \Big| - \delta,
\end{align*}
or in other words, the difference in entropies $\operatorname{H}\parenth{A|BR}$ of the extensions is strictly smaller than the original difference (more precisely, bounded away from it by a constant amount). This is problematic for the goal of using these extensions to bound the original entropy difference. While this obstacle may not be impossible to overcome, it does suggest that there may not be a straightforward proof using this approach.

\section{Proofs of Lemmas}
\label{app:proofsFvdG}

\subsection{Proof of Lemma \ref{VariationalTraceDistanceLemma}}\label{ProofVariationalTraceDistanceLemma}

\begin{proof}
Let $\Lambda \coloneqq \set{\ketbra{e_x}{e_x}}_{\xinX} \in \mathcal{M}$ be arbitrary. We first prove that \begin{equation*}
    \Tc{\tr{\Lambda \rho}}{\tr{\Lambda \sigma}} \leq \T{\rho}{\sigma}.
\end{equation*}
To see this, consider a spectral decomposition
\begin{equation*}
    \rho - \sigma = \sum_{\yinX} \lambda_y \ketbra{f_y}{f_y}
\end{equation*}
of $\rho - \sigma$, where the $\lambda_y \in \mathbb{R}$ and the $\ket{f_y}$ form an orthonormal basis for $\mathcal{H}$. Define the positive and negative parts of $\rho - \sigma$ as
\begin{equation*}\begin{aligned}
    P &\coloneqq \sum_{\yinX, \lambda_y > 0} \lambda_y \ketbra{f_y}{f_y} \\
    Q &\coloneqq \sum_{\yinX, \lambda_y < 0} \abs{\lambda_y} \ketbra{f_y}{f_y},
\end{aligned}\end{equation*}
respectively. Note that $\rho - \sigma = P - Q$ and $\abs{\rho - \sigma} \coloneqq \sqrt{(\rho - \sigma)^\dagger (\rho - \sigma)} = P + Q$. Then for any $\xinX$, we have
\begin{equation}\begin{aligned}\label{TraceDistancexEstimate}
    \abs{\braket{e_x}{\parenth{\rho - \sigma} e_x}}
    &= \abs{\braket{e_x}{\parenth{P - Q} e_x}} \\
    &\leq \braket{e_x}{\parenth{P + Q} e_x} \\
    &= \braket{e_x}{\abs{\rho - \sigma} e_x},
\end{aligned}\end{equation}
where in the second line we used that $P, Q \geq 0$. Thus, we have
\begin{equation*}\begin{aligned}
    \Tc{\tr{\Lambda \rho}}{\tr{\Lambda \sigma}}
    &= \frac{1}{2} \sum_{\xinX} \abs{\braket{e_x}{\rho \, e_x} - \braket{e_x}{\sigma \, e_x}} \\
    &\leq \frac{1}{2} \sum_{\xinX} \braket{e_x}{\abs{\rho - \sigma}e_x} \\
    &= \frac{1}{2} \tr{\abs{\rho - \sigma}} \\
    &=\T{\rho}{\sigma},
\end{aligned}\end{equation*}
as needed. Now, in the second line above, we applied Eq.~(\ref{TraceDistancexEstimate}) for each $\xinX$. Thus, equality holds in the above iff
\begin{equation*}
    \forall \xinX, \abs{\braket{e_x}{(P - Q) e_x}} = \braket{e_x}{(P + Q) e_x}.
\end{equation*}
Since $P, Q \geq 0$, this occurs iff
\begin{equation*}
    \forall \xinX, \braket{e_x}{P e_x} = 0 \lor \braket{e_x}{Q e_x} = 0.
\end{equation*}
Since any positive semidefinite $A \in \mathcal{L}(\mathcal{H})$ satisfies $\braket{e_x}{A e_x} = 0 \iff \ket{e_x} \in \ker A$, the above is equivalent to
\begin{equation*}
    \forall \xinX, \ket{e_x} \in \ker P \lor \ket{e_x} \in \ker Q,
\end{equation*}
as needed. To see that $\mathcal{T}(\rho, \sigma)$ is nonempty, consider $\tilde{\Lambda} \coloneqq \set{\ketbra{f_y}{f_y}}_{\yinX} \in \mathcal{M}$. Clearly, this satisfies
\begin{equation*}
    \forall \yinX, \ket{f_y} \in \ker P \lor \ket{f_y} \in \ker Q,
\end{equation*}
so $\tilde{\Lambda}$ is an optimizing projective measurement, as needed.
\end{proof}

\subsection{Proof of Lemma \ref{VariationalFidelityLemma}}\label{ProofVariationalFidelityLemma}

\begin{proof}
Let $\Lambda \coloneqq \set{\ketbra{e_x}{e_x}}_{\xinX} \in \mathcal{M}$ be arbitrary. We first prove that \begin{equation*}
    \F{\rho}{\sigma} \leq \Fc{\tr{\Lambda \rho}}{\tr{\Lambda \sigma}}.
\end{equation*}
To see this, note that by the polar decomposition, there exists a unitary $U$ on $\mathcal{H}$ such that
\begin{equation}\label{PolarDecomposition}
    \K = \KU.
\end{equation}
Then we have
\begin{equation*}\begin{aligned}
    \F{\rho}{\sigma}
    &= \norm{\srss}_1 \\
    &= \tr{\K} \\
    &= \tr{\KU} \\
    &= \sum_{\xinX} \tr{\KxU} \\
    &= \abs{\sum_{\xinX} \tr{\KxU}},
\end{aligned}\end{equation*}
where the last line follows since $\F{\rho}{\sigma} \geq 0$. Applying the triangle inequality followed by the Cauchy--Schwarz inequality, we continue with
\begin{equation}\begin{aligned} \label{FidelityInequality}
    \F{\rho}{\sigma}
    &= \abs{\sum_{\xinX} \tr{\KxU}} \\
    &\leq \sum_{\xinX} \abs{\tr{\sqrt{\rho} \, \ketbra{e_x}{e_x} \ketbra{e_x}{e_x} \sqrt{\sigma} \, U}} \\
    &\leq \sum_{\xinX} \sqrt{\tr{\ketbra{e_x}{e_x} \, \rho}} \sqrt{\tr{\ketbra{e_x}{e_x} \, \sigma}} \\
    &= \Fc{\tr{\Lambda \rho}}{\tr{\Lambda \sigma}},
\end{aligned}\end{equation}
as needed. Now, equality occurs in the above iff the triangle inequality and the Cauchy--Schwarz inequality are both saturated. This occurs iff each term $\tr{\KxU}$ either is of a fixed complex phase or is equal to 0, and the set $\set{\ketbra{e_x}{e_x} \sqrt{\rho}, \ketbra{e_x}{e_x} \sqrt{\sigma} \, U} \subseteq \mathcal{L}\parenth{\mathcal{H}}$ is linearly dependent for all $\xinX$. To continue, observe that $\sum_{\xinX} \tr{\KxU} = \F{\rho}{\sigma} \geq 0$. In addition, observe that $\ketbra{e_x}{e_x} \sqrt{\rho} \neq 0$ and $\ketbra{e_x}{e_x} \sqrt{\sigma} \, U \neq 0$, since $\rho$ and $\sigma$ are assumed invertible and $\ket{e_x} \neq 0$. This implies that equality in Eq.~(\ref{FidelityInequality}) occurs iff the terms $\tr{\KxU}$ are all nonnegative and $\ketbra{e_x}{e_x} \sqrt{\rho} = c_x \ketbra{e_x}{e_x}  \sqrt{\sigma} \, U$ for all $\xinX$, where $c_x \in \mathbb{C} \setminus \set{0}$. Equivalently, equality holds iff
\begin{equation*}\begin{aligned}
    \forall \xinX, \, &\exists c_x \in \mathbb{C} \setminus \set{0}, \\
    &\braket{e_x}{\sqrt{\sigma} \, U \sqrt{\rho} \, e_x} \geq 0 \, \land \\
    &\ketbra{e_x}{e_x} \sqrt{\rho} = c_x \ketbra{e_x}{e_x} \sqrt{\sigma} \, U.
\end{aligned}\end{equation*}
Since $\rho$ is assumed invertible, this is equivalent to
\begin{equation*}\begin{aligned}
    \forall \xinX, \, &\exists c_x \in \mathbb{C} \setminus \set{0}, \\
    &\braket{e_x}{\power{\rho}{-} \KU \power{\rho}{-} \rho \, e_x} \geq 0 \, \land \\
    &\bra{e_x} = c_x \bra{e_x} \power{\rho}{-} \KU \power{\rho}{-}.
\end{aligned}\end{equation*}
Recalling Eq.~(\ref{PolarDecomposition}) and the definition of the operator geometric mean $M \coloneqq \rho^{-1} \# \sigma = \M$, this is equivalent to
\begin{equation*}\begin{aligned}
    \forall \xinX, \, &\exists c_x \in \mathbb{C} \setminus \set{0}, \\ &\braket{e_x}{M \rho \, e_x} \geq 0 \land \bra{e_x} M = \frac{1}{c_x} \bra{e_x}.
\end{aligned}\end{equation*}
Since $M = M^\dagger$, this is equivalent to
\begin{equation*}\begin{aligned}
    \forall \xinX, \, &\exists c_x \in \mathbb{C} \setminus \set{0}, \\ &\overline{c_x} \braket{e_x}{M \rho M e_x} \geq 0 \land M \ket{e_x} = \frac{1}{\overline{c_x}} \ket{e_x}.
\end{aligned}\end{equation*}
But since $M \rho M = \sigma$ and $\sigma > 0$ (since $\sigma$ is assumed invertible), this is equivalent to
\begin{equation*}
    \forall \xinX, \exists c_x > 0, M \ket{e_x} = \frac{1}{c_x} \ket{e_x}.
\end{equation*}
Thus, since $M > 0$, so the eigenvalues of $M$ are all positive, we see that equality in Eq.~(\ref{FidelityInequality}) holds iff $\set{\ket{e_x}}_{\xinX}$ is an eigenbasis for $M$, as needed. Clearly, an eigenbasis for $M$ exists, so $\mathcal{F}(\rho, \sigma)$ is nonempty, as needed.
\end{proof}

\subsection{Proof of Lemma \ref{ClassicalFvdGInequalitiesLemma}}\label{ProofClassicalFvdGInequalitiesLemma}

\begin{proof}
For notational simplicity, we work with arbitrary probability distributions $p$ and $q$ on a finite alphabet $\mathcal{X}$. We first prove the left-hand inequality in Eq.~(\ref{ClassicalFvdGInequalities}). Observe that
\begin{equation*}\begin{aligned}
    1 - \Fc{p}{q}
    &= \frac{1}{2} \sum_{\xinX} \parenth{p(x) + q(x) - 2 \sqrt{p(x)q(x)}} \\
    &= \frac{1}{2} \sum_{\xinX} \abs{\sqrt{p(x)} - \sqrt{q(x)}}^2 \\
    &\leq \frac{1}{2} \sum_{\xinX} \abs{\sqrt{p(x)} - \sqrt{q(x)}} \abs{\sqrt{p(x)} + \sqrt{q(x)}} \\
    &= \frac{1}{2} \sum_{\xinX} \abs{p(x) - q(x)} \\
    &= \Tc{p}{q},
\end{aligned}\end{equation*}
as needed. Now, equality holds in the above iff
\begin{equation*}\begin{aligned}
    \forall \xinX, &\abs{\sqrt{p(x)} - \sqrt{q(x)}} \abs{\sqrt{p(x)} - \sqrt{q(x)}} \\
    = &\abs{\sqrt{p(x)} - \sqrt{q(x)}} \abs{\sqrt{p(x)} + \sqrt{q(x)}},
\end{aligned}\end{equation*}
which occurs iff
\begin{equation*}\begin{aligned}
    \forall \xinX, &\abs{\sqrt{p(x)} - \sqrt{q(x)}} = 0 \\
    \lor &\abs{\sqrt{p(x)} - \sqrt{q(x)}} = \abs{\sqrt{p(x)} + \sqrt{q(x)}},
\end{aligned}\end{equation*}
which occurs iff
\begin{equation*}
    \forall \xinX, p(x) = q(x) \lor p(x) = 0 \lor q(x) = 0,
\end{equation*}
as needed. Next, we prove the right-hand inequality in Eq.~(\ref{ClassicalFvdGInequalities}). Observe that
\begin{equation*}\begin{aligned}
    \operatorname{T}_c&\parenth{p,q}^2 \\
    &= \parenth{\frac{1}{2} \sum_{\xinX} \abs{p(x) - q(x)}}^2 \\
    &= \frac{1}{4} \parenth{\sum_{\xinX} \abs{\sqrt{p(x)} - \sqrt{q(x)}} \abs{\sqrt{p(x)} + \sqrt{q(x)}}}^2 \\
    &\leq \frac{1}{4} \sum_{\xinX} \parenth{\sqrt{p(x)} - \sqrt{q(x)}}^2 \sum_{\xinX} \parenth{\sqrt{p(x)} + \sqrt{q(x)}}^2 \\
    &= \frac{1}{4} (2 - 2 \Fc{p}{q}))(2 + 2 \Fc{p}{q}) \\
    &= 1-\Fc{p}{q}^2,
\end{aligned}\end{equation*}
as needed. Note that in the third line above, we applied the Cauchy--Schwarz inequality with
\begin{equation*}\begin{aligned}
    u &\coloneqq \parenth{\abs{\sqrt{p(x)} - \sqrt{q(x)}}}_{\xinX} \in \mathbb{R}^d \\
    v &\coloneqq \parenth{\abs{\sqrt{p(x)} + \sqrt{q(x)}}}_{\xinX} \in \mathbb{R}^d.
\end{aligned}\end{equation*}
Thus, equality holds in the above iff $u$ and $v$ saturate the Cauchy--Schwarz inequality, which occurs iff $\set{u,v} \subseteq \mathbb{R}^d$ is linearly dependent. Since $v \neq 0$, this is equivalent to $u \in \Span \set{v}$, i.e.\ $\exists a \in \mathbb{R}, u = a \, v$. Since $u \geq 0$ and $v > 0$, this is equivalent to $\exists a \geq 0, u = a \, v$, i.e.\
\begin{equation*}\begin{aligned}
    \exists a \geq 0, \, &\forall \xinX, \\
    &\abs{\sqrt{p(x)} - \sqrt{q(x)}} = a \abs{\sqrt{p(x)} + \sqrt{q(x)}}.
\end{aligned}\end{equation*}
Breaking into cases, this is equivalent to
\begin{equation*}\begin{aligned}
    \exists a \geq 0, \, &\forall \xinX, \\
    &\sqrt{p(x)} - \sqrt{q(x)} = a \parenth{\sqrt{p(x)} + \sqrt{q(x)}} \lor \\ &\sqrt{q(x)} - \sqrt{p(x)} = a \parenth{\sqrt{p(x)} + \sqrt{q(x)}}.
\end{aligned}\end{equation*}
Rearranging, this is equivalent to
\begin{equation*}\begin{aligned}
    \exists a \geq 0, \, &\forall \xinX,\\
    &\frac{1-a}{1+a} \sqrt{p(x)} = \sqrt{q(x)} \lor \frac{1+a}{1-a} \sqrt{p(x)} = \sqrt{q(x)}.
\end{aligned}\end{equation*}
Now, since $\sqrt{p(x)},\sqrt{q(x)} \geq 0$ for all $\xinX$, we can restrict $a \in [0,1]$, so the above is equivalent to
\begin{equation*}\begin{aligned}
    \exists a \in [0,1], \, &\forall \xinX,\\
    &\frac{1-a}{1+a} \sqrt{p(x)} = \sqrt{q(x)} \lor \frac{1+a}{1-a} \sqrt{p(x)} = \sqrt{q(x)}.
\end{aligned}\end{equation*}
Treating $a=0$ and $a=1$ separately, this is equivalent to
\begin{equation*}\begin{aligned}
    &\parenth{\forall \xinX, p(x) = q(x)} \lor \parenth{\forall \xinX, p(x) = 0 \lor q(x) = 0} \lor \\
    &\parenth{\begin{aligned} &\exists a \in (0,1), \forall \xinX, \\ &q(x) = \parenth{\frac{1-a}{1+a}}^2 p(x) \lor q(x) = \parenth{\frac{1+a}{1-a}}^2 p(x) \end{aligned}}.
\end{aligned}\end{equation*}
Since $p,q \geq 0$, this is equivalent to
\begin{equation*}\begin{aligned}
    p = q \, \lor \, &p \cdot q = 0 \, \lor \Bigg(\exists a \in (0,1), \forall \xinX, \\
    &q(x) = b(a) \, p(x) \lor q(x) = \frac{1}{b(a)} \, p(x) \Bigg),
\end{aligned}\end{equation*}
where we defined $b(a) \coloneqq \parenth{\frac{1-a}{1+a}}^2 \in (0,1)$ for $a \in (0,1)$. Note that $b(a):(0,1) \to (0,1)$ is a bijection. Thus, the above is equivalent to
\begin{equation*}\begin{aligned}
    p = q \, \lor \, &p \cdot q = 0 \, \lor \Bigg(\exists b \in (0,1), \forall \xinX, \\
    &q(x) = b \, p(x) \lor q(x) = \frac{1}{b} \, p(x) \Bigg),
\end{aligned}\end{equation*}
as needed. Thus, with $p \coloneqq (\braket{e_x}{\rho \, e_x})_{\xinX}$ and $q \coloneqq (\braket{e_x}{\sigma \, e_x})_{\xinX}$, the lemma holds.
\end{proof}

\section{Random Density Operators}

\subsection{Random Quantum-Classical States}\label{RandomQuantumClassicalStates}

Our goal in this section is to randomly sample quantum-classical states $\rho \in \mathcal{D}(\mathcal{H}^A \otimes \mathcal{H}^B)$ of the form
\begin{equation}\label{RandomQuantumClassical}
    \rho = \sumk \alpha_k \rho_k \otimes \ketbra{f_k}{f_k},
\end{equation}
where the $\rho_k \in \mathcal{D}(\mathcal{H}^A)$, the $\alpha_k \geq 0$ satisfy $\sumk \alpha_k = 1$, and the $\ket{f_k}$ form an orthonormal basis for $\mathcal{H}^B$. To begin, we randomly sample the $\alpha_k$ from the $(d_B-1)$--dimensional simplex spanned by the standard basis vectors $(1,0,\ldots,0), \ldots, (0,\ldots,0,1) \in \mathbb{R}^{d_B}$. Then, we let $\set{\ket{e_j}}_j$ be a standard basis for $\mathcal{H}^A$ and $\set{\ket{f_k}}_k$ be a standard basis for $\mathcal{H}^B$. Next, for each $k$, we randomly choose the eigenvalues of $\rho_k$ from the $(d_A-1)$--dimensional simplex spanned by the standard basis vectors $(1,0,\ldots,0), \ldots, (0,\ldots,0,1) \in \mathbb{R}^{d_A}$, and we place those random eigenvalues on the main diagonal of a matrix $D_k \in \mathcal{L}(\mathcal{H}^A)$ (written with respect to the standard basis $\set{\ket{e_j}}_j$). To randomize the eigenvectors of the $\rho_k$, we pick random unitaries $U_k$ according to a Haar-uniform distribution, and we conjugate the $D_k$ by the $U_k$ to obtain the $\rho_k$. That is, we set $\rho_k \coloneqq U_k D_k U_k^{\dagger}$ for each $k$. Applying Eq.~(\ref{RandomQuantumClassical}) then yields the desired random quantum-classical states as used in Fig.~\ref{fig:RandomQuantumClassicalStates}. Mathematica code to implement the above procedure can be found on \github{Github}.

\subsection{Random Classical States with Fixed Angular Distance}\label{RandomClassicalStatesFixedAngularDistance}

Our goal in this section is to randomly sample classical (i.e.\ commuting) states $\rho, \sigma \in \mathcal{D}(\mathcal{H})$ with a prescribed angular distance $\A{\rho}{\sigma} = A$ for any $A \gtrsim 0$. To do this, consider any points $r, s \in \mathbb{R}^d$ on the unit $(d-1)$--sphere such that $r, s \geq 0$, where $d \coloneqq \dim \mathcal{H}$. Now, let $\rho$ and $\sigma$ be commuting states with eigenvalues $\rho_j \coloneqq r_j^2$ and $\sigma_j \coloneqq s_j^2$ respectively, where $j = 1, \ldots, d$. Then
\begin{equation*}\begin{aligned}
    \A{\rho}{\sigma} &= \arccos \norm{\srss}_1 \\
    &= \arccos \sum_{j=1}^d \sqrt{\rho_j \sigma_j} \\
    &= \arccos \sum_{j=1}^d r_j s_j \\
    &= \arccos r \cdot s,
\end{aligned}\end{equation*}
where the last line is just the angular distance between $r$ and $s$. Thus, by the above construction, it remains to randomly sample such points $r, s$. To do this, we randomly pick $\tilde{r}, \tilde{s}$ on the unit $(d-1)$--sphere. We then let $r = \operatorname{abs}(\tilde{r})$, where $\operatorname{abs}(\cdot)$ denotes component-wise absolute value. Next, we rotate $r$ towards $\tilde{s}$ by $A$ radians to obtain $s$, where $A \gtrsim 0$ is fixed. For $A$ close to 0, this procedure generates points $r, s$ in the positive hyperoctant of the unit $(d-1)$--sphere with high probability; $s$ is very unlikely to lie outside the positive hyperoctant for $A$ close enough to 0. Thus, we simply reject the few cases where $s$ happens to lie outside the positive hyperoctant. As explained above, this yields random classical states $\rho, \sigma$ at a fixed angular distance $A$, as used in Fig.~\ref{fig:FixedAngularDistance}. Mathematica code to implement the above procedure can be found on \github{Github}.

\section{Perturbation argument}
\label{app:perturb}

Consider any arbitrary (i.e.~possibly noninvertible) $\rho,\sigma$ saturating the upper Fuchs--van de Graaf inequality.
We continue our analysis onwards from Eq.~\eqref{eq:appFvdG}, attempting to apply a similar argument as in the proof of Theorem~\ref{QuantumFvdGTheorem}. Take any
$\Lambda_T \in \mathcal{T}(\rho, \sigma)$ (note that this step is essentially why we avoided perturbing the trace-distance term when writing Eq.~\eqref{eq:appFvdG} --- if we were to instead take
$\Lambda_T \in \mathcal{T}(\rho_\delta, \sigma_\delta)$, then $\Lambda_T$ would implicitly depend on $\delta$, which poses some challenges in our subsequent analysis). Now note that if we were to perform this measurement on the states $(\rho,\sigma)$, the trace distance between the resulting distributions is ``close'' to the trace distance between the distributions that would be obtained by performing this measurement on $(\rho_\delta,\sigma_\delta)$ instead:
\begin{align*}
&\left|\Tc{\tr{\Lambda_T \rho}}{\tr{\Lambda_T \sigma}} -  \Tc{\tr{\Lambda_T \rho_\delta}}{\tr{\Lambda_T \sigma_\delta}}\right| \\
\leq& \left|\Tc{\tr{\Lambda_T \rho}}{\tr{\Lambda_T \sigma}} -  \Tc{\tr{\Lambda_T \rho_\delta}}{\tr{\Lambda_T \sigma}}\right| \\
&+\left|\Tc{\tr{\Lambda_T \rho_\delta}}{\tr{\Lambda_T \sigma}} -  \Tc{\tr{\Lambda_T \rho_\delta}}{\tr{\Lambda_T \sigma_\delta}}\right| \\
\leq& \Tc{\tr{\Lambda_T \rho}}{\tr{\Lambda_T \rho_\delta}} + \Tc{\tr{\Lambda_T \sigma}}{\tr{\Lambda_T \sigma_\delta}} \\
\leq& \T{\rho}{\rho_\delta} + \T{\sigma}{\sigma_\delta} \eqqcolon f(\delta),
\end{align*}
where the second inequality holds due to the reverse triangle inequality for trace distance, and the function $f(\delta)$ in the last line satisfies $\lim_{\delta\to 0^+} f(\delta) = 0$.

With this, we can obtain the following chain of inequalities by following the same arguments as in the derivation of Eq.~\eqref{QuantumFvdGInequality2}:
\begin{equation*}
\begin{aligned}
    \T{\rho}{\sigma} - f(\delta)
    &= \Tc{\tr{\Lambda_T \rho}}{\tr{\Lambda_T \sigma}} - f(\delta) \\
    &\leq \Tc{\tr{\Lambda_T \rho_\delta}}{\tr{\Lambda_T \sigma_\delta}}  \\
    &\leq \sqrt{1-\Fc{\tr{\Lambda_T \rho_\delta}}{\tr{\Lambda_T \sigma_\delta}}^2}  \\
    &\leq\sqrt{1-\F{\rho_\delta}{\sigma_\delta}^2}\\
    &\leq \T{\rho}{\sigma} + g(\delta) ,
\end{aligned}
\end{equation*}
where the first inequality follows since $\T{\rho_\delta}{\sigma_\delta} \leq \T{\rho}{\sigma}$, and in the last line the function $g(\delta)$ represents the $\negl(\delta)$ bound in Eq.~\eqref{eq:appFvdG}. Observe that the first and last expressions in the above chain of inequalities differ by only $f(\delta) + g(\delta)$. This implies that for each individual inequality in the chain, the difference between the two sides of each inequality is also at most $f(\delta) + g(\delta)$, which is a negligible function $\negl(\delta)$. From this fact, and the continuity of the function $\sqrt{1-x^2}$, we conclude that the measurement $\Lambda_T$ necessarily satisfies
\begin{equation*}
    \left|\Fc{\tr{\Lambda_T \rho_\delta}}{\tr{\Lambda_T \sigma_\delta}} - \F{\rho_\delta}{\sigma_\delta}\right| \leq \negl(\delta),
\end{equation*}
i.e.~it ``approximately preserves'' the fidelity between $\rho_\delta,\sigma_\delta$.

This suggests that it may be useful to characterize the set of measurements that ``approximately preserve'' fidelity in the above sense. To this end, we prove the following lemma, which looks roughly similar in some ways to Lemma~\ref{VariationalFidelityLemma} (note, however, that we have only proven one direction of the implications in this lemma, i.e.~this result might not be a bidirectional implication).
\begin{lemma}
Consider any states $\rho, \sigma \in \mathcal{D}(\mathcal{H})$ and define $\rho_\delta, \sigma_\delta$ as in Eq.~\eqref{eq:perturb}, in which case $\rho_\delta, \sigma_\delta$ are invertible for all $\delta>0$, and we can define the operator
\begin{equation}\label{eq:defMdelta}
    M_\delta \coloneqq \rho_\delta^{-1} \# \sigma_\delta.
\end{equation}
Suppose that $\Lambda \in \mathcal{M}$ is a rank-1 projective measurement with projectors $\set{\ketbra{e_x}{e_x}}_{\xinX}$ such that for all $\delta>0$,
\begin{equation}\label{eq:approxF}
\left|\Fc{\tr{\Lambda \rho_\delta}}{\tr{\Lambda \sigma_\delta}} - \F{\rho_\delta}{\sigma_\delta}\right| 
\leq \negl(\delta).
\end{equation}
Then for any $\xinX$ such that $\ket{e_x} \notin \ker \rho$ and any $\delta>0$, there exists $\mu_{x \delta} \in \mathbb{C}$, which together satisfy
\begin{align}\label{eq:limitcond}
    \lim_{\delta\to 0^+} \sqrt{\rho_\delta} \left(M_\delta - \abs{\mu_{x \delta}} \mathbbm{1} \right) \ket{e_x} = 0.
\end{align}
\end{lemma}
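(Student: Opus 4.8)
The plan is to replay the proof of Lemma~\ref{VariationalFidelityLemma} (Appendix~\ref{ProofVariationalFidelityLemma}) with ``approximate equalities'' in place of exact ones. Since $\rho_\delta,\sigma_\delta$ are invertible for $\delta>0$, that proof's chain of inequalities applies to the pair $(\rho_\delta,\sigma_\delta)$: writing $K_\delta \coloneqq \sqrt{\sqrt{\rho_\delta}\,\sigma_\delta\sqrt{\rho_\delta}}$, fixing a unitary $U_\delta$ with $K_\delta = \sqrt{\rho_\delta}\sqrt{\sigma_\delta}\,U_\delta$ by polar decomposition, and setting $t_{x\delta}\coloneqq \tr{\sqrt{\rho_\delta}\,\ketbra{e_x}{e_x}\sqrt{\sigma_\delta}\,U_\delta}$, one obtains
\begin{equation*}
\F{\rho_\delta}{\sigma_\delta} = \abs{\textstyle\sum_{\xinX} t_{x\delta}} \leq \sum_{\xinX}\abs{t_{x\delta}} \leq \sum_{\xinX}\sqrt{\braket{e_x}{\rho_\delta\,e_x}}\sqrt{\braket{e_x}{\sigma_\delta\,e_x}} = \Fc{\tr{\Lambda\rho_\delta}}{\tr{\Lambda\sigma_\delta}},
\end{equation*}
where the first inequality is the triangle inequality and the second is Cauchy--Schwarz for the Hilbert--Schmidt inner product of $\ketbra{e_x}{e_x}\sqrt{\rho_\delta}$ and $\ketbra{e_x}{e_x}\sqrt{\sigma_\delta}\,U_\delta$. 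Since $\F{\rho_\delta}{\sigma_\delta}\leq\Fc{\tr{\Lambda\rho_\delta}}{\tr{\Lambda\sigma_\delta}}$ always, hypothesis~\eqref{eq:approxF} forces the total slack in this chain to be $\negl(\delta)$. The triangle slack and the Cauchy--Schwarz slack are each non-negative, hence each is $\negl(\delta)$; and since each is a sum over $x$ of non-negative terms, every individual term is $\negl(\delta)$, giving for each $\xinX$
\begin{equation*}
\sqrt{\braket{e_x}{\rho_\delta\,e_x}}\sqrt{\braket{e_x}{\sigma_\delta\,e_x}} - \abs{t_{x\delta}} \leq \negl(\delta) \qquad\text{and}\qquad \abs{t_{x\delta}} - \operatorname{Re}(t_{x\delta}) \leq \negl(\delta),
\end{equation*}
the second using that $\sum_{\xinX}t_{x\delta} = \F{\rho_\delta}{\sigma_\delta}\geq 0$ is real, so that the triangle slack equals $\sum_{\xinX}\parenth{\abs{t_{x\delta}} - \operatorname{Re}(t_{x\delta})}$.

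Next I would fix $\xinX$ with $\ket{e_x}\notin\ker\rho$, so that $\braket{e_x}{\rho_\delta\,e_x} = (1-\delta)\braket{e_x}{\rho\,e_x} + \delta/d$ is bounded away from $0$ for small $\delta$ (the only place this hypothesis enters), while $\abs{t_{x\delta}}\leq 1$. Writing $c_{x\delta}\coloneqq t_{x\delta}/\braket{e_x}{\rho_\delta\,e_x}$ for the orthogonal-projection coefficient, the standard robustness of the Cauchy--Schwarz inequality applied to the near-equality above, together with $\braket{e_x}{\rho_\delta\,e_x}$ bounded below and $\braket{e_x}{\sigma_\delta\,e_x}\leq 1$, yields after unfolding the Hilbert--Schmidt norm
\begin{equation*}
\norm{\parenth{U_\delta^\dagger\sqrt{\sigma_\delta} - \overline{c_{x\delta}}\,\sqrt{\rho_\delta}}\ket{e_x}} \leq \negl(\delta).
\end{equation*}
By~\eqref{eq:defMdelta} and the definition of the geometric mean, $M_\delta = \power{\rho_\delta}{-} K_\delta \power{\rho_\delta}{-}$; since $K_\delta$ is positive, hence Hermitian, $K_\delta = K_\delta^\dagger = U_\delta^\dagger\sqrt{\sigma_\delta}\sqrt{\rho_\delta}$, so $\sqrt{\rho_\delta}\,M_\delta = K_\delta\,\power{\rho_\delta}{-} = U_\delta^\dagger\sqrt{\sigma_\delta}$, and the displayed bound becomes $\norm{\sqrt{\rho_\delta}\parenth{M_\delta - \overline{c_{x\delta}}\,\mathbbm{1}}\ket{e_x}} \leq \negl(\delta)$.

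The final step is to replace the (possibly complex) scalar $\overline{c_{x\delta}}$ by a non-negative real one, which is where the phase bound $\abs{t_{x\delta}} - \operatorname{Re}(t_{x\delta}) \leq \negl(\delta)$ enters. Since $\braket{e_x}{\rho_\delta\,e_x}>0$, we have $\arg c_{x\delta} = \arg t_{x\delta}$ and $\abs{c_{x\delta}} = \abs{t_{x\delta}}/\braket{e_x}{\rho_\delta\,e_x}$, so a half-angle identity gives $\bigl|\,\overline{c_{x\delta}} - \abs{c_{x\delta}}\,\bigr| = 2\abs{c_{x\delta}}\,\bigl|\sin\tfrac{\arg t_{x\delta}}{2}\bigr|$. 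Using $\abs{t_{x\delta}}\sin^2\tfrac{\arg t_{x\delta}}{2} = \tfrac12\parenth{\abs{t_{x\delta}} - \operatorname{Re}(t_{x\delta})} \leq \negl(\delta)$ together with $\abs{t_{x\delta}}\leq 1$, the numerator obeys $\abs{t_{x\delta}}\,\bigl|\sin\tfrac{\arg t_{x\delta}}{2}\bigr| = \sqrt{\abs{t_{x\delta}}}\cdot\sqrt{\abs{t_{x\delta}}\sin^2\tfrac{\arg t_{x\delta}}{2}} \leq \negl(\delta)$, and dividing by $\braket{e_x}{\rho_\delta\,e_x}$ (bounded below) gives $\bigl|\,\overline{c_{x\delta}} - \abs{c_{x\delta}}\,\bigr| \leq \negl(\delta)$. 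Since $\norm{\sqrt{\rho_\delta}\ket{e_x}}\leq 1$, the triangle inequality for norms then yields
\begin{equation*}
\norm{\sqrt{\rho_\delta}\parenth{M_\delta - \abs{c_{x\delta}}\mathbbm{1}}\ket{e_x}} \leq \negl(\delta),
\end{equation*}
so choosing $\mu_{x\delta}\coloneqq c_{x\delta}$ (whence $\abs{\mu_{x\delta}} = \abs{c_{x\delta}}$) establishes~\eqref{eq:limitcond}.

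I expect the last step to be the main obstacle: neither $\abs{t_{x\delta}}$ nor $\arg t_{x\delta}$ is individually forced to be small --- only the product $\abs{t_{x\delta}}\parenth{1-\cos\arg t_{x\delta}}$ is $\negl(\delta)$ --- so converting the phase information into a norm bound on $\overline{c_{x\delta}} - \abs{c_{x\delta}}$ requires the square-root splitting above and the a priori bound $\abs{t_{x\delta}}\leq 1$; this is also exactly where $\ket{e_x}\notin\ker\rho$ is needed, via $\braket{e_x}{\rho_\delta\,e_x}$ being bounded below. A minor technical point is that the polar-decomposition unitary $U_\delta$ varies with $\delta$, which is harmless since it appears only inside the stated $\negl(\delta)$ norm estimates and is never required to converge.
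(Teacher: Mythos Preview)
Your proposal is correct and follows essentially the same route as the paper's proof: both replay the Appendix~\ref{ProofVariationalFidelityLemma} chain of inequalities for the invertible pair $(\rho_\delta,\sigma_\delta)$, split the total $\negl(\delta)$ slack into its triangle-inequality and Cauchy--Schwarz parts (each termwise non-negative), define the same scalar $\mu_{x\delta}=c_{x\delta}=t_{x\delta}/\braket{e_x}{\rho_\delta e_x}$, use the Cauchy--Schwarz equality case $\norm{A-\mu B}_2^2=\norm{A}_2^2-\abs{\tr{B^\dagger A}}^2/\norm{B}_2^2$ to get a $\negl(\delta)/\norm{B_{x\delta}}_2$-type norm bound, and finally convert $\mu_{x\delta}\to\abs{\mu_{x\delta}}$ via the observation that $\sum_x t_{x\delta}$ is real (so $\abs{t_{x\delta}}-\operatorname{Re} t_{x\delta}\leq\negl(\delta)$ termwise, yielding $\abs{\,\abs{z}-z\,}=\sqrt{2\abs{z}(\abs{z}-\operatorname{Re} z)}\leq\sqrt{2\,\negl(\delta)}$). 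The only cosmetic differences are that the paper tracks explicit constants ($\sqrt{2\eps_\delta}$) where you write $\negl(\delta)$, and the paper states the phase step as a ``geometric argument'' where you spell out the half-angle identity; your identification $U_\delta^\dagger\sqrt{\sigma_\delta}=\sqrt{\rho_\delta}M_\delta$ is the same as the paper's final substitution.
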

\begin{proof}
To make some steps easier to follow, we shall start by writing the condition~\eqref{eq:approxF} in the form
\begin{equation}
\label{eq:approxFeps}
\left|\Fc{\tr{\Lambda \rho_\delta}}{\tr{\Lambda \sigma_\delta}} - \F{\rho_\delta}{\sigma_\delta}\right| 
\leq \eps_\delta,
\end{equation}
and only set $\eps_\delta = \negl(\delta)$ near the end of the argument. The idea of the proof is to follow essentially the same steps as in the proof in Appendix~\ref{ProofVariationalFidelityLemma}, except with ``approximate equalities'' instead of inequalities. We begin by letting $U_\delta$ be the operator (induced by polar decomposition) such that
\[
\sqrt{\sqrt{\rho_\delta} \, \sigma_\delta \sqrt{\rho_\delta}} = \sqrt{\rho_\delta}\sqrt{\sigma_\delta} U_\delta.
\]
For brevity, we introduce the following operators (note that this part of the construction also essentially works for a general measurement with POVM operators $\set{E_x}_{\xinX}$; one just needs to use $\sqrt{E_x}$ in place of $\ketbra{e_x}{e_x}$):
\[
A_{x \delta} \coloneqq \ketbra{e_x}{e_x} \sqrt{\sigma_\delta} U_\delta, \quad B_{x \delta} \coloneqq \ketbra{e_x}{e_x} \sqrt{\rho_\delta} .
\]
Note that since we are only considering the $\delta>0$ regime, $\rho_\delta,\sigma_\delta$ are invertible and hence the operators $A_{x \delta},B_{x \delta}$ are always nonzero.

Now, following the same calculations as in the proof in Appendix~\ref{ProofVariationalFidelityLemma} gives
\begin{equation*}\begin{aligned}
    \F{\rho_\delta}{\sigma_\delta}
    &= \sum_{\xinX} \tr{\sqrt{\rho_\delta} \, \ketbra{e_x}{e_x} \sqrt{\sigma_\delta} \, U_\delta} \\
    &= \sum_{\xinX} \tr{B_{x \delta}^\dagger A_{x \delta}} \\
    & \leq \sum_{\xinX} \abs{\tr{B_{x \delta}^\dagger A_{x \delta}}} \\
    & \leq \sum_{\xinX} \norm{A_{x \delta}}_2 \norm{B_{x \delta}}_2 \\
    &= \sum_{\xinX} \sqrt{\tr{\ketbra{e_x}{e_x} \, \rho_\delta}} \sqrt{\tr{\ketbra{e_x}{e_x} \, \sigma_\delta}} \\
    &= \Fc{\tr{\Lambda \rho_\delta}}{\tr{\Lambda \sigma_\delta}},
\end{aligned}\end{equation*}
where the second inequality is the step where Cauchy-Schwarz was applied. 
Combining this with the condition~\eqref{eq:approxFeps}, we see that both the inequalities in the chain must be ``tight up to $\eps_\delta$'', i.e.~we have
\begin{equation}\label{eq:approxbnd1}
    \abs{\sum_{\xinX} \abs{\tr{B_{x \delta}^\dagger A_{x \delta}}} - \sum_{\xinX} \tr{B_{x \delta}^\dagger A_{x \delta}} } \leq \eps_\delta
\end{equation}
and
\begin{equation}\label{eq:approxbnd2}
    \abs{\sum_{\xinX} \norm{A_{x \delta}}_2 \norm{B_{x \delta}}_2 - \sum_{\xinX} \tr{B_{x \delta}^\dagger A_{x \delta}} } \leq \eps_\delta.
\end{equation}

For the bound~\eqref{eq:approxbnd1}, if we write $z_x \coloneqq \tr{B_{x \delta}^\dagger A_{x \delta}}$, then we have some values $z_x \in \mathbb{C}$ such that $\abs{z_x}\leq \norm{A_{x \delta}}_2 \norm{B_{x \delta}}_2 \leq 1$, and their sum $\sum_{\xinX} z_x$ is real-valued (because it is equal to $\F{\rho_\delta}{\sigma_\delta}$) and within $\eps_\delta$ of the sum of their absolute values. Viewing these complex numbers $z_x$ as vectors in the complex plane, a geometric argument then shows that we must have 
$\abs{\abs{z_x} - z_x} \leq \sqrt{2\eps_\delta}$ for all $\xinX$, i.e.
\begin{equation}\label{eq:approxbndx1}
    \forall \xinX, \abs{\abs{\tr{B_{x \delta}^\dagger A_{x \delta}}} - \tr{B_{x \delta}^\dagger A_{x \delta}}} \leq \sqrt{2\eps_\delta}.
\end{equation}
As for the bound~\eqref{eq:approxbnd2}, note that the terms in the summations satisfy $\abs{\tr{B_{x \delta}^\dagger A_{x \delta}}} \leq \norm{A_{x \delta}}_2 \norm{B_{x \delta}}_2$ (this is just Cauchy-Schwarz) for all $\xinX$, from which we can deduce that
\begin{equation*}
    \forall \xinX, \abs{\, \norm{A_{x \delta}}_2 \norm{B_{x \delta}}_2 - \abs{\tr{B_{x \delta}^\dagger A_{x \delta}}} } \leq \eps_\delta.
\end{equation*}
The above bound suggests that $\norm{A_{x \delta}}_2^2 \norm{B_{x \delta}}_2^2$ and $\abs{\tr{B_{x \delta}^\dagger A_{x \delta}}}^2$ should be close as well (for all $\xinX$). To formalize this, we note that $\norm{A_{x \delta}}_2 \norm{B_{x \delta}}_2$ and $\abs{\tr{B_{x \delta}^\dagger A_{x \delta}}}$ are both upper bounded by $1$, and thus
\begin{align*}
    & \norm{A_{x \delta}}_2^2 \norm{B_{x \delta}}_2^2 - \abs{\tr{B_{x \delta}^\dagger A_{x \delta}}}^2 \\
    =& \left(\norm{A_{x \delta}}_2 \norm{B_{x \delta}}_2 + \abs{\tr{B_{x \delta}^\dagger A_{x \delta}}}\right)\cdot\\
    &\quad \left(\norm{A_{x \delta}}_2 \norm{B_{x \delta}}_2 - \abs{\tr{B_{x \delta}^\dagger A_{x \delta}}}\right)\\
    \leq& 2\left(\norm{A_{x \delta}}_2 \norm{B_{x \delta}}_2 - \abs{\tr{B_{x \delta}^\dagger A_{x \delta}}}\right).
\end{align*}
Combined with the preceding bound, this gives
\begin{equation}\label{eq:approxbndx2}
    \forall \xinX, \sqrt{\norm{A_{x \delta}}_2^2 \norm{B_{x \delta}}_2^2 - \abs{\tr{B_{x \delta}^\dagger A_{x \delta}}}^2} \leq \sqrt{2\eps_\delta}.
\end{equation}

The reason for expressing the bound in the above form is so we can now make use of the following {equality} that appears in the derivation of Cauchy-Schwarz (which can be verified by expanding the left-hand-side):
\begin{equation*}
\norm{A_{x \delta} - \frac{\tr{B_{x \delta}^\dagger A_{x \delta}}}{\norm{B_{x \delta}}_2^2} B_{x \delta} }_2^2 = \norm{A_{x \delta}}_2^2 - \frac{ \abs{\tr{B_{x \delta}^\dagger A_{x \delta}}}^2 }{\norm{B_{x \delta}}_2^2}.
\end{equation*}
Let us now define 
\begin{equation}\label{eq:defmu}
\mu_{x \delta} \coloneqq \frac{\tr{B_{x \delta}^\dagger A_{x \delta}}}{\norm{B_{x \delta}}_2^2},
\end{equation}
so that the left-hand-side of the preceding expression is just $\norm{A_{x \delta} - {\mu_{x \delta}} B_{x \delta} }_2^2$. 
With this, we can write
\begin{align*}
    \forall \xinX, \norm{A_{x \delta} - {\mu_{x \delta}} B_{x \delta} }_2 
    &= \sqrt{\norm{A_{x \delta}}_2^2 - \frac{ \abs{\tr{B_{x \delta}^\dagger A_{x \delta}}}^2 }{\norm{B_{x \delta}}_2^2}} \\
    &\leq \frac{\sqrt{2\eps_\delta}}{\norm{B_{x \delta}}_2},
\end{align*}
where in the last line we applied the bound~\eqref{eq:approxbnd2}.
With this, 
we have for all $\xinX$:
\begin{align*}
&\norm{A_{x \delta} - \abs{\mu_{x \delta}} B_{x \delta} }_2 \\ 
\leq& \norm{A_{x \delta} - {\mu_{x \delta}} B_{x \delta} }_2 + 
\abs{{\mu_{x \delta}} - \abs{\mu_{x \delta}}\,} \norm{B_{x \delta} }_2 
\\
=& \norm{A_{x \delta} - {\mu_{x \delta}} B_{x \delta} }_2 + \frac{\abs{\tr{B_{x \delta}^\dagger A_{x \delta}} - \abs{\tr{B_{x \delta}^\dagger A_{x \delta}}} }}{\norm{B_{x \delta} }_2} \\
\leq& \frac{\sqrt{2\eps_\delta}}{\norm{B_{x \delta}}_2} + \frac{\sqrt{2\eps_\delta}}{\norm{B_{x \delta}}_2} = \frac{2\sqrt{2\eps_\delta}}{\norm{B_{x \delta}}_2},
\end{align*}
where in the last line we applied the bound~\eqref{eq:approxbndx1}.

The above result is essentially the main bound that yields the desired claim. (Note that in the case where $\rho,\sigma$ are both invertible and we set both $\delta$ and $\eps_\delta$ to $0$, the above bound reduces to $A_{x \delta} = \abs{\mu_{x \delta}} B_{x \delta}$, which is basically equivalent to the $\ketbra{e_x}{e_x} \sqrt{\rho} = c_x \ketbra{e_x}{e_x}  \sqrt{\sigma} \, U$ condition in the Appendix~\ref{ProofVariationalFidelityLemma} proof.) To finish up, we note that by the definition of $B_{x \delta}$ we have $\norm{B_{x \delta}}_2 = \bra{e_x} \rho_\delta \ket{e_x}$, and hence (because $\lim_{\delta \to 0^+} \rho_\delta$ exists and equals $\rho$):
\[
\lim_{\delta \to 0^+} \norm{B_{x \delta}}_2 = \lim_{\delta \to 0^+} \bra{e_x} \rho_\delta \ket{e_x} = \bra{e_x} \rho \ket{e_x}.
\]
This means that for any $\xinX$ such that $\ket{e_x} \notin \ker \rho$, the value $L_x \coloneqq \lim_{\delta \to 0^+} \norm{B_{x \delta}}_2$ exists and is strictly positive.
Hence for such $x$, we know that for all sufficiently small $\delta$, we would have $\norm{B_{x \delta}}_2 \geq L_x/2$ and thus also
\begin{equation}\label{eq:convbnd}
\norm{A_{x \delta} - \abs{\mu_{x \delta}} B_{x \delta} }_2 
\leq \frac{4\sqrt{2\eps_\delta}}{L_x}.
\end{equation}
With this, we finally substitute $\eps_\delta = \negl(\delta)$ to conclude that for such $x$, we have
\begin{align*}
\lim_{\delta\to 0^+} \norm{A_{x \delta} - \abs{\mu_{x \delta}} B_{x \delta} }_2 = 0,
\end{align*}
and thus 
\begin{align}\label{eq:limitAB}
\lim_{\delta\to 0^+} (A_{x \delta} - \abs{\mu_{x \delta}} B_{x \delta}) = 0.
\end{align}
(It does not matter which operator norm is considered in the above convergence statement, because all norms on a finite-dimensional vector space yield the same topology.)
Note that for all $\delta>0$, by substituting the definitions of $A_{x \delta}$, $B_{x \delta}$ and $U_\delta$ (and also using the fact that $\rho_\delta$ is invertible), we get
\begin{align*}
A_{x \delta} - \abs{\mu_{x \delta}} B_{x \delta} &=  \ketbra{e_x}{e_x} (\sqrt{\sigma_\delta} U_\delta - \abs{\mu_{x \delta}} \sqrt{\rho_\delta}) \\
&=  \ketbra{e_x}{e_x} (M_\delta \sqrt{\rho_\delta}- \abs{\mu_{x \delta}} \sqrt{\rho_\delta}).
\end{align*}
Substituting this into \eqref{eq:limitAB}, then left-multiplying by $\bra{e_x}$ and taking the adjoint, we get the desired result.
\end{proof}

Note that if it can be shown that $\lim_{\delta\to 0^+} M_\delta$ and $\lim_{\delta\to 0^+} \mu_{x \delta}$ exist (let us denote their limiting values as $\lim_{\delta\to 0^+} M_\delta \eqqcolon M_0$ and $\lim_{\delta\to 0^+} \mu_{x \delta} \eqqcolon \mu_{x 0}$), then the
condition~\eqref{eq:limitcond} reduces to a form of ``skewed'' eigenvalue condition:
\begin{equation}\label{eq:skeweig}
    \sqrt{\rho} M_0 \ket{e_x} = \sqrt{\rho} \abs{\mu_{x 0}} \ket{e_x}.
\end{equation}
(In fact, it may not be strictly necessary to show that both $\lim_{\delta\to 0^+} M_\delta$ and $\lim_{\delta\to 0^+} \mu_{x \delta}$ exist; from the condition~\eqref{eq:limitcond} we know that e.g.~if $\lim_{\delta\to 0^+} \sqrt{\rho_\delta} M_\delta \ket{e_x} $ exists then so does $\lim_{\delta\to 0^+} \sqrt{\rho_\delta} \abs{\mu_{x \delta}} \ket{e_x} $ and vice versa, although the factors of $\sqrt{\rho_\delta}$ and $\ket{e_x}$ make this not entirely straightforward to work with).  
Roughly speaking, the main challenge in trying to get the above line of reasoning to yield a result fully similar to Lemma~\ref{VariationalFidelityLemma} is that when $\rho$ is noninvertible, we cannot multiply both sides of Eq.~\eqref{eq:skeweig} by $\sqrt{\rho^{-1}}$ to remove the $\sqrt{\rho}$ prefactors and get a genuine eigenvalue equation. If instead we try working with one of the intermediate bounds in the proof, such as~\eqref{eq:convbnd}, and multiply by $\sqrt{\rho_\delta^{-1}}$ (which is well-defined for $\delta>0$), the issue is that the maximum eigenvalue of $\sqrt{\rho_\delta^{-1}}$ diverges as $\delta\to 0^+$, making it difficult to bound the norm of the resulting quantities.

Regarding the question of whether $\lim_{\delta\to 0^+} M_\delta$ exists, we note that in the special case where both $\rho$ and $\sigma$ are pure qubit states, we can without loss of generality write $\rho = \ket{0}$ and $\sigma = \alpha \ket{0} + \beta \ket{1}$ in some basis, in which case we can compute the following limit for $\alpha\neq 0$:
\begin{equation*}
    \lim_{\delta \to 0^+} M_\delta = 
    \left(  
        \begin{array}{cc}
            \abs{\alpha} & \frac{a b^*}{\abs{\alpha}} \\
            \frac{b a^*}{\abs{\alpha}} & \frac{\abs{\beta}^2}{\abs{\alpha}}+\sqrt{1 + \frac{\abs{\beta}^2}{\abs{\alpha}^2}} \\
        \end{array}
    \right).
\end{equation*}
(The $\alpha=0$ case corresponds to $\rho,\sigma$ being orthogonal, in which case it appears that $M_\delta$ diverges as $\delta\to 0^+$, but this case is not very relevant in our context since the Fuchs--van de Graaf inequalities are trivially saturated in this case.)
More generally, for $\dim \mathcal{H} = 3$, we were able to compute the operator $M_\delta$ analytically in Mathematica and found that as long as $\rho,\sigma$ are nonorthogonal pure states, the limit $\lim_{\delta\to 0^+} M_\delta$ indeed exists. However, we currently do not have a generalization of the argument to higher dimensions.

Finally, we remark that one possible direction for further investigation is that rather than focusing on proving an analogue of Lemma~\ref{VariationalFidelityLemma}, we could instead try to use some of the intermediate steps in the proof (such as the bounds~\eqref{eq:approxbndx1} and~\eqref{eq:approxbndx2}) to more directly analyze the set of states saturating the upper Fuchs--van de Graaf inequality. However, it currently does not seem clear whether this gives a useful result.

\section{Applications in keyrate calculations}
\label{app:keyrates}

Here, we briefly outline some potential applications of our results in computing QKD keyrates, focusing on a form of QKD referred to as device-independent (DI) QKD~\cite{PAB+09}.
Basically, it was observed in e.g.~\cite{Woo14} that one potential approach to compute such keyrates is to lower bound the fidelity between a particular pair of states. If it could be shown that these states saturate the upper Fuchs--van de Graaf inequality, then the fidelity can be written in terms of the trace distance~\footnote{To be more precise: one could always use the lower Fuchs--van de Graaf inequality to bound the fidelity, but in this context the resulting values are highly suboptimal, hence we are interested in whether the states could in fact saturate the upper bound.}, which has an operational interpretation in terms of guessing probability~\cite{NC10}. Various methods are known for bounding guessing probabilities in DIQKD~\cite{MPA11,PM13,NBS+18}, and hence this could serve as another potential approach for DIQKD keyrate computations. In fact, if the upper Fuchs--van de Graaf inequality were saturated in this context, then the guessing-probability bound derived in~\cite{MPA11} for a particular DIQKD protocol would yield an expression exactly matching the (tight) keyrate formula derived in~\cite{PAB+09} for that protocol, suggesting some plausibility in this approach.

Another aspect of DIQKD in which such a result could be useful would be protocols based on advantage distillation~\cite{TLR20}, which refers to the use of two-way communication for the information-reconciliation step~\cite{rennerthesis} of the protocol. Again, the analysis in~\cite{TLR20} is based on the fidelity between a particular pair of states (different from the above), and it was observed that if that pair of states saturates the upper Fuchs--van de Graaf inequality, then significantly better results could be obtained. Hence a more detailed characterization of the set of such states could be of use.

\bibliography{main}

\end{document}